\newtheorem{theorem}{Theorem}[section]
\newtheorem{proposition}[theorem]{Proposition}
\newtheorem{lemma}[theorem]{Lemma}
\newtheorem{remark}[theorem]{Remark}
\newtheorem{example}[theorem]{Example}
\newcommand{\R}{\mathbb{R}}
\newcommand{\E}{\mathbb{E}}
\begin{document}
 
\begin{frontmatter}


\title{Confidence in Causal Inference under Structure Uncertainty in Linear Causal Models with Equal Variances}

\runtitle{Confidence in Causal Inference under Structure Uncertainty}

\author{\fnms{David} \snm{Strieder}\corref{}\ead[label=e1]{david.strieder@tum.de}}
\and
\author{\fnms{Mathias} \snm{Drton}\ead[label=e2]{mathias.drton@tum.de}}
\address{Technical University of Munich; TUM School of Computation, Information and Technology, \\
	Munich Center for Machine Learning (MCML), Munich Data Science Institute (MDSI)  \\
	\printead{e1,e2}}

\runauthor{D. Strieder and M. Drton}

\begin{abstract}
 {Inferring the effect of interventions within complex systems is a fundamental problem of statistics. A widely studied approach employs structural causal models that postulate noisy functional relations among a set of interacting variables. The underlying causal structure is then naturally represented by a directed graph whose edges indicate direct causal dependencies.  In a recent line of work, additional assumptions on the causal models have been shown to render this causal graph identifiable from observational data alone.  One example is the assumption of linear causal relations with equal error variances that we will take up in this work.  When the graph structure is known, classical methods may be used for calculating estimates and confidence intervals for causal effects.  However, in many applications, expert knowledge that provides an a priori valid causal structure is not available. Lacking alternatives, a commonly used two-step approach first learns a graph and then treats the graph as known in inference.  This, however, yields confidence intervals that are overly optimistic and fail to account for the data-driven model choice.  We argue that to draw reliable conclusions, it is necessary to incorporate the remaining uncertainty about the underlying causal structure in confidence statements about causal effects.  To address this issue, we present a framework based on test inversion that allows us to give confidence regions for total causal effects that capture both sources of uncertainty: causal structure and numerical size of nonzero effects.}
\end{abstract}

\begin{keyword}[class=MSC]
\kwd[Primary ]{62D20, 62H22}
\end{keyword}

\begin{keyword}
\kwd{confidence intervals}
\kwd{causal effects}
\kwd{linear structural equation models}
\kwd{equal error variances}
\kwd{graphical models}
\end{keyword}

\end{frontmatter}

\section{Introduction}
    Questions about causal relations are at the heart of many research problems. Researchers across various fields, ranging from economics to biology and medicine, are interested in distinguishing causes from effects to predict the outcome of interventions in complex systems. While controlled experiments provide an established method for inferring causal relations, in many applied settings interventional studies are not feasible for ethical or cost considerations. Thus, inferring causal relationships based solely on observational data is an important task that the field of causal discovery addresses. 
	
    In the last decades, causal discovery has gained popularity, with much fundamental research being done on the topics of identification and estimation \citep{Pearl09,spirtes:book}. Identifiability results clarify when it is theoretically possible to go beyond mere correlations. More specifically, they clarify under which circumstances assumptions lead to unique identification of interventional distributions from a given joint probability distribution. Moreover, numerous structure learning algorithms have been proposed that infer the underlying causal structure based on observed data. Knowledge of the underlying causal structure then allows researchers to reason about the effect of interventions in complex systems and given a causal structure, standard statistical methods may be applied to estimate the involved causal effects and provide an uncertainty assessment for the numerical size of the effects. However, in the typical applied setting, the exact causal structure is (at least partially) unknown.  In order to draw reliable conclusions and make calibrated confidence statements, the remaining uncertainty about the underlying causal structure needs to be incorporated in causal inference. 

    Without prior expert knowledge about the underlying causal structure, a commonly employed ``naïve'' approach would involve splitting the task. In the first step, a causal learning algorithm estimates the underlying causal structure. In the second step, within the inferred model, confidence intervals for the causal parameters can be calculated using classical statistical inference methods. However, the resulting intervals are not valid if the causal learning algorithm infers the wrong causal structure. Such a two-step method conditions away the uncertainty in the causal structure arising from the data-driven model choice and the classical ‘double-dipping’ problem occurs since both causal structure and the effect are estimated from one data set. Thus, the approach is overly optimistic in its conclusion about the strength and existence of causal effects, and its confidence regions do not achieve the desired coverage probability, especially under high uncertainty with respect to the underlying structure.

    Despite the popularity and growing interest in causal discovery, we are unaware of prior attempts to rigorously account for uncertainty in the causal structure when providing confidence statements about causal effects. Here, we use the term confidence in the technical sense for a set with a given desired frequentist coverage probability. A completely different approach for uncertainty quantification that we do not consider here is to form Bayesian credible sets. In fact, a number of authors have used Bayesian approaches for uncertainty quantification in causal discovery, but primarily with an emphasis on the uncertainty in the graphical structure as opposed to causal effects; see \cite{10.5555/1795114.1795143},  \cite{10.5555/3020652.3020677}, and \cite{cao:2019} for three selected examples. 
    
    In this article, we want to shed light on the problem by highlighting the challenges of the task of handling structure uncertainty and by proposing a solution for constructing confidence intervals for total causal effects that capture both types of uncertainty, the uncertainty regarding the causal structure resulting from a data-driven model selection and the uncertainty about the numerical size of the effect.
    Our solution is feasible for problems of small to moderate scale---e.g., our simulation experiments will  consider problems with up to 12 variables.  However, already at this scale, rigorously accounting for structure uncertainty means to account for more than $10^{26}$ possible structures, reflecting the fast superexponential growth with the number of nodes in the causal graph.
    
    
    In Section \ref{section:background} of the paper, we review the considered framework of structural causal models \citep{peters:book, handbook}. More specifically, in order to ensure identifiability from observational data alone, we consider a restricted class of causal models, namely linear structural equation models with Gaussian errors and equal variances \citep{PetersEV} and review the definition of a total causal effect, which is the target of interest in this study. We then generalize and improve the ideas introduced for bivariate problems in \citet{UAI21:Strieder} and present a general framework for constructing confidence intervals for total causal effects in multivariate data (Section \ref{section:testinversion} and \ref{section:testing}). Furthermore, we present computational shortcuts and  analyze the performance of the introduced framework in numerical experiments (Section \ref{section:computation}). In the concluding Section \ref{section:discussion} we discuss extensions of the framework to partially quantify the uncertainty over causal structures.
    
\section{Background}\label{section:background}

    This section reviews linear structural equation models and the total causal effect which is the target of interest in this study.  
	
\subsection{Linear Structural Equation Models}
			
    Structural causal models constitute a commonly employed tool to study causal relationships and represent noisy causal relations among a set of interacting variables. Each variable is modelled as a function of a subset of other variables, its causes, and a stochastic error term. The causal perspective emerges from viewing those relations as making assignments rather than representing mathematical equations. The left-hand side, the effect, is assigned the value specified on the right-hand side, the causes. Externally varying the values on the right-hand side results in a change on the left-hand side, but not vice versa. This reflects the asymmetry inherent in cause-effect relations. 
    
    We assume access to observational data in the form of a sample of independent copies of a random vector $X=(X_1, \dots , X_d)$; without loss of generality, we assume the random vector to have zero mean. Without any further assumptions on the structural causal model, e.g. the involved functions, identification of the underlying causal structure is only possible up to a Markov equivalence class.  Therefore, to ensure unique identifiability, following a line of research initiated by \cite{PetersEV}, we focus on linear relations and normal distributed errors with equal variances by assuming that  $X$ solves the equation system 
	\begin{equation} \label{eq:LSEM}
    X_j=\sum_{i\neq j}\beta_{j,i}X_i + \varepsilon_j, \quad j=1, \dots ,d.
    \end{equation}
    Here, $B:=[\beta_{j,i}]_{j,i=1}^d$ are unknown parameters that represent the direct causal effects between the variables, and $\varepsilon_j$ are independent, normally distributed error terms
    \begin{equation*}
                \varepsilon_j \overset{i.i.d.}{\sim} \mathcal{N}(0,\sigma^2), \quad j=1, \dots ,d,
    \end{equation*} 
    with common, unknown variance $\sigma^2>0$. Each specific \textit{Linear Structural Equation Model} (LSEM) restricts a subset of the direct effects $\beta_{j,i}$ to be zero. Thus, LSEMs are naturally represented by a (minimal) directed graph $G(B)$ with vertex set $V=\{1, \dots , d\}$. In the associated directed graph, the edge set equals the support of $B$, thus, a missing edge $i \rightarrow j$ indicates that $\beta_{j,i}=0$. As in related work, we assume the underlying directed graph to be acyclic (DAG), which entails that $B$ is permutation similar to a triangular matrix and the system \eqref{eq:LSEM} admits the unique solution $X=(I_d-B)^{-1} \varepsilon$, with $I_d$ denoting the $d\times d$ identity matrix. Incorporating the equal variance assumption, the covariance matrix of $X$ is seen to be 
    \begin{equation*} \E[XX^{T}]=\sigma^2(I_d-B)^{-1}(I_d-B)^{-T}.
    \end{equation*}
    
    In the sequel, we will use the following graphical concepts. Each DAG admits a (not necessarily unique) topological ordering of its vertices, that is, a permutation $\pi$ of $\{1, \dots , d\}$ such that the existence of a directed path from node $\pi(i)$ to node $\pi(j)$ implies $i<j$. We write $i<_{G}j$ if node $i$ precedes node $j$ in such a causal ordering of the graph. If the DAG contains an edge from node $i$ to node $j$, then node $i$ is a parent of node $j$, and we denote the set of all parents of node $j$ with $p(j)$. It has been shown that under the above modelling assumptions with equal error variances, i.e., homoscedasticity across all interacting variables, the causal ordering is identifiable and corresponds to an ordering of conditional variances \citep{ChenEV,Ghoshal2018}.  Hence, causal effects can also be uniquely identified.

\subsection{Causal Effects}

    Informally put, we are interested in estimating how much the value of $X_j$ changes if we externally intervene in the system and set the value of $X_i$ to $x_i$. Such an intervention represents a change in the true data-generating process and, thus, a modification in our probabilistic model. In the structural equation framework, the effect of such an external intervention can be expressed by replacing the $i$-th equation with $X_i=x_i$ and is denoted as $do(X_i=x_i)$. 
	
    In this work, we seek to construct confidence intervals for the total causal effect $\mathcal{C}(i\rightarrow j)$ in LSEMs. This effect is formally defined as the unit change in the expectation of $X_j$ with respect to an intervention in $X_i$, that is,
    \begin{equation*}
    \mathcal{C}(i\rightarrow j):=\frac{\text{d}}{\text{d} x_i} \E[X_j|\text{ do}(X_i=x_i)].   
    \end{equation*}

    \begin{remark}\label{remark:paths}
    Note that the total causal effect in an LSEM $(B,\sigma^2)$ can be intuitively expressed based on the respective underlying DAG $G(B)$. The effect  $\mathcal{C}(i\rightarrow j)$ equals the sum of all products of edge weights along all directed paths from $i$ to $j$. Furthermore, the effect is zero if there does not exist a directed path from $i$ to $j$. This follows from the path properties of the matrix $(I_d-B)^{-1}$ that computes the solution $X$ of the structural equations from the independent errors. 
    \end{remark}

    \begin{example}
    Consider the following LSEM and the corresponding (minimal) DAG:
    
    \begin{minipage}[b]{0.45\linewidth}
    \begin{align*}
        X_1&=\beta_{1,3} X_3 + \varepsilon_1 \\
        X_2&=\beta_{2,1} X_1 + \beta_{2,4} X_4 + \beta_{2,5} X_5 + \varepsilon_2\\
        X_3&=\varepsilon_3 \\
        X_4&=\beta_{4,1} X_1 + \varepsilon_4 \\
        X_5&=\beta_{5,3} X_3 + \varepsilon_5 \\
    \end{align*}
    \end{minipage}
    \begin{minipage}[b]{0.5\linewidth}
        \tikzset{
          every node/.style={circle, inner sep=1mm, minimum size=0.8cm, draw, thick, black, fill=white, text=black},
          every path/.style={thick}
        }      \scalebox{.8}{
        \begin{tikzpicture}[->,>=triangle 45,shorten >=1pt,
          auto,thick, main
          node/.style={circle,fill=gray!20,draw,font=\sffamily\bfseries}]

            \node[main node] (1) at (2,0)     {$1$};
            \node[main node] (2) at (5,-4)    {$2$};
            \node[main node] (3) at (0,-2)     {$3$};
            \node[main node] (4) at (5,0)     {$4$};
            \node[main node] (5) at (2,-4)    {$5$};

            \path[color=red,every node/.style={font=\sffamily\small}] 
            (1) edge[line width=1mm] node {$\beta_{4,1}$} (4) (4) edge[line width=1mm] node {$\beta_{2,4}$} (2);
            \path[color=red,every node/.style={font=\sffamily\small}]
            (1) edge[line width=1mm] node {$\beta_{2,1}$} (2);
            \path[color=black,every node/.style={font=\sffamily\small}]
            (3) edge[line width=1mm] node {$\beta_{5,3}$} (5) (5) edge[line width=1mm] node {$\beta_{2,5}$} (2);
            \path[color=black,every node/.style={font=\sffamily\small}] 
            (3) edge[line width=1mm] node {$\beta_{1,3}$} (1);
        \end{tikzpicture} }
    \end{minipage}
    The total causal effect is given by the sum of all products of edge weights along all directed paths. For example, we obtain $\mathcal{C}(1\rightarrow 2)=\beta_{2,1}+\beta
_{4,1}\beta_{2,4}$, whereas $\mathcal{C}(2\rightarrow 1)=0$.
    \end{example}

    Our aim is to construct a confidence interval for the total causal effect $\mathcal{C}(i\rightarrow j)$ when the underlying causal structure is not known and has to be learned. We stress that, under the identifiability assumption of an underlying Gaussian LSEM with equal error variances, this is a well-defined problem. In fact, every admissible distribution can be represented by a single (minimal) DAG. This DAG then uniquely defines the total causal effect.
    
   A ``naïve'' two-step approach that first learns a causal structure and then calculates confidence intervals in the inferred model does not incorporate the uncertainty regarding the underlying causal structure. To respect this uncertainty in the calculations of a valid confidence interval, it is necessary to take all possible structures, i.e. graphs, into account. As a result, the target quantity has a composite structure and, thus, obtaining finite sample distributions of estimators is difficult. 
   
   In many situations where the sampling distribution of estimators is unobtainable, bootstrapping and other resampling methods can offer an appealing framework to construct confidence intervals.  At first sight, resampling may appear to provide a simple solution to the problem under discussion. This solution relies on computing for each resampled data set a causal effect estimate. To obtain this estimate one may compose a consistent model selection method that learns the causal graph with a consistent causal effect estimator in the learned model. Note that this composition is well-defined due to the identifiability of LSEMs with equal error variances. However, as a statistic, this composition lacks smoothness and, thus, bootstrapping procedures may fail severely \citep{andrews:guggenberger:2010,drton:williams:2011}. We emphasize that \citet{drton:williams:2011} demonstrate that even in low dimensions, the asymptotic behavior of confidence intervals and bootstrap tests is difficult to predict for complex composite settings as encountered here. The subtleties stem from the fact that the set of covariance matrices associated with at least one possible DAG form a union of smooth manifolds. This union contains singularities at the intersections of the manifolds, which is highlighted in the bivariate case  by \citet{UAI21:Strieder}. They demonstrate that singular points arise at the intersections of the two models. Consequently, bootstrapping cannot be used to capture model uncertainty correctly. This failure of the bootstrapping methods can also be observed in our simulations in Section \ref{section:simulation}.

    In conclusion, there is a need to develop new frameworks for constructing confidence intervals that circumvent the problems that originate from the non-smooth nature of the target of interest. 

\section{Confidence Intervals via Test Inversion}\label{section:testinversion}
     In the remainder of this paper, we develop a framework that circumvents the problems posed by the non-smooth nature of the total causal effect and provides valid confidence intervals for causal inference under structure uncertainty. In this section, we introduce our main ansatz, the test inversion approach, and derive the concrete hypothesis that subsequently needs to be tested.

\subsection{Inversion of Tests}
	
    Our main idea is to leverage the classical duality between statistical hypothesis tests and confidence regions in the following way. If we perform valid hypothesis tests for all attainable causal effects, then all values that we cannot reject form a confidence region for the total causal effect. More specifically, let $\alpha \in (0,1)$ be a fixed significance level. We perform (asymptotically) valid level $\alpha$ tests for the hypothesis of a fixed total causal effect $\psi$ for all attainable values of the causal effect, that is, for all $\psi \in \R$, we test 
	\begin{align*}
	    \text{H}_0: \mathcal{C}(1 \rightarrow 2) = \psi
	\end{align*}
	with a test whose acceptance region we denoted by $A(\psi)$. Then a (asymptotically) valid $(1-\alpha)$-confidence region for the total causal effect $\mathcal{C}(1 \rightarrow 2)$ for the data $X$ is given by 
	\begin{align*}
	    C(X):=\{\psi \in \R: X \in A(\psi) \}.
	\end{align*}
    A proof of this classical result can be found, e.g., in  \cite[Theorem 9.2.2]{Casella}.

    With the above perspective, our problem is shifted to developing suitable hypothesis tests for all attainable causal effects. Due to the lack of concrete knowledge about the model, all possible causal graphs must be respected. Thus, we have to construct hypothesis tests in a composite setting that remains challenging. In the next section, we explicitly describe the hypotheses that need to be tested to construct confidence regions for the total causal effect.

\subsection{Hypothesis of a Fixed Causal Effect}
	
    Recalling the assumptions of an underlying LSEM with homoscedastic errors across all interacting variables, we consider the following task. We assume that the given data consists of random vectors $X^{(1)}, \dots , X^{(n)}$ drawn independently from a centered multivariate normal distribution $N(0,\Sigma_0)$.  The distribution's density $p(x|\Sigma_0)$ factorizes according to an unknown DAG while satisfying the assumption of equal error variances. To account for the uncertainty in the causal structure, we need to test at level $\alpha$ whether any multivariate centered normal distribution that factorizes to any DAG under the equal variance assumption allows for a total causal effect of size $\psi$, and repeat that test for all values $\psi \in \R$. In light of the normality assumption, it is natural to parameterize our statistical model in terms of the covariance matrix $\Sigma$. Further, the additional assumption of homoscedasticity across all interacting variables restricts the set of possible covariance matrices further as we detail now.
    
    Given a fixed DAG $G$, under the assumption of equal error variances, all possible normal distributions with a density that factorizes according to $G$ are defined by the family  
	        \begin{equation}\label{models:G}
	            \mathcal{N}(G):=\{ N(0,\Sigma): \exists B \in \R^G, \sigma^2 >0 \text{ with } \Sigma = \sigma^2(I_d-B)^{-1}(I_d-B)^{-T} \} ,
            \end{equation}
	        where $\R^G=\{B \in \R^{d \times d} : \beta_{j,i}=0 \text{ if } i \notin p(j) \}$.
	Considering and combining all possible DAG models, our overall model corresponds to the following union of sets of covariance matrices:
	        \begin{equation}\label{definition:model}
                \mathcal{M}:=\bigcup_{G \in \mathcal{G}(d)} \{\Sigma \in \text{PD}(d) : N(0,\Sigma) \in \mathcal{N}(G) \},
            \end{equation}
    where $\mathcal{G}(d)$ is the set of all DAGs with $d$ nodes.

    \begin{remark}\label{remark:ordering}
        If $G^*$ is a supergraph of $G$, then the set $\mathcal{N}(G)$ is a subset of $\mathcal{N}(G^*)$.  Hence, in the definition of $\mathcal{M}$, we only need to consider the union over all complete DAGs with $d$ nodes, i.e. all possible orderings.
    \end{remark}
    
    The following Proposition shows that we can identify the model $\mathcal{M}$, which is a subset of all positive definite matrices satisfying the assumption of equal error variances, solely based on polynomial constraints in the entries of the covariance matrix. Here, and in the remainder of this article, we write $\Sigma_{i,j|p(i)}$ for the conditional covariance matrix, that is,
    \begin{equation*}
        \Sigma_{j,i|p(i)}:=\Sigma_{j,i}-\Sigma_{j,p(i)}(\Sigma_{p(i),p(i)})^{-1}\Sigma_{p(i),i}.
    \end{equation*}
    \begin{proposition}\label{theorem:thm1} For a covariance matrix $\Sigma \in$ PD$(d)$, the following statements are equivalent:
    \begin{enumerate}
        \item[(i)] $ \Sigma \in \mathcal{M}$, 
        \item[(ii)] There exists a complete DAG $G$ and $\sigma^2 >0$ such that for all $k=1, \dots ,d$ 
    \end{enumerate}  
    \begin{equation}\label{eq:thm1}
        \sigma^2= \Sigma_{k,k|p(k)}=\Sigma_{k,k}-\Sigma_{k,p(k)}(\Sigma_{p(k),p(k)})^{-1}\Sigma_{p(k),k}.
    \end{equation} 
    \end{proposition}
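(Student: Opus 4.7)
The plan is to prove the two directions separately using the recursive (Cholesky-style) decomposition of a centered Gaussian distribution along a fixed topological order. By Remark \ref{remark:ordering}, it suffices to establish the equivalence for complete DAGs, and for a complete DAG $G$ with topological ordering $\pi$, the parent set of node $k$ is exactly the set of $\pi$-predecessors of $k$. After relabeling we may assume $\pi$ is the identity, so $p(k)=\{1,\dots,k-1\}$ for each $k$.

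For the direction (i)$\Rightarrow$(ii), suppose $\Sigma=\sigma^2(I_d-B)^{-1}(I_d-B)^{-T}$ for some $B\in\R^{G}$ supported on a complete DAG $G$. In the induced LSEM, node $k$'s equation reads $X_k=\sum_{i\in p(k)}\beta_{k,i}X_i+\varepsilon_k$, where $\varepsilon_k\sim\mathcal{N}(0,\sigma^2)$ is independent of $\varepsilon_1,\dots,\varepsilon_{k-1}$ and hence independent of $X_{p(k)}$, which is a deterministic function of those earlier errors. Consequently the conditional distribution $X_k\mid X_{p(k)}$ is Gaussian with mean $\sum_{i\in p(k)}\beta_{k,i}X_i$ and variance $\sigma^2$. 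The Schur-complement formula for Gaussian conditioning identifies this variance with $\Sigma_{k,k|p(k)}$, giving \eqref{eq:thm1}.

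For the converse (ii)$\Rightarrow$(i), assume $\Sigma$ satisfies \eqref{eq:thm1} for some complete DAG $G$ and $\sigma^2>0$. Define the candidate parameter matrix $B\in\R^G$ by taking $B_{k,p(k)}:=\Sigma_{k,p(k)}(\Sigma_{p(k),p(k)})^{-1}$ as a row vector, and $\beta_{k,i}:=0$ for $i\notin p(k)$. Because $G$ is a complete DAG, $B$ is, up to the relabeling fixed above, strictly lower triangular, so $I_d-B$ is invertible. If $X\sim\mathcal{N}(0,\Sigma)$, then classical Gaussian conditioning yields that $\eta_k:=X_k-\sum_{i\in p(k)}\beta_{k,i}X_i$ is independent of $X_{p(k)}=(X_1,\dots,X_{k-1})$, is centered, and by hypothesis has variance $\Sigma_{k,k|p(k)}=\sigma^2$. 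By induction on $k$, the variables $\eta_1,\dots,\eta_d$ are mutually independent $\mathcal{N}(0,\sigma^2)$. Setting $\varepsilon:=(I_d-B)X$ we thus obtain $X=BX+\varepsilon$ with $\varepsilon\sim\mathcal{N}(0,\sigma^2 I_d)$, which implies $\Sigma=\sigma^2(I_d-B)^{-1}(I_d-B)^{-T}$ and hence $\Sigma\in\mathcal{N}(G)\subset\mathcal{M}$.

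The main technical point, rather than an obstacle, is keeping the bookkeeping between the two representations straight: one has to recognize that the regression coefficient vector $\Sigma_{k,p(k)}(\Sigma_{p(k),p(k)})^{-1}$ appearing in the right-hand side of \eqref{eq:thm1} is exactly the row of $B$ that reproduces the LSEM structural equation, and that the residuals $\eta_k$ are not only uncorrelated with $X_{p(k)}$ (which is immediate from orthogonality of projections) but independent across $k$, which requires the inductive argument along the ordering. The equal-variance condition enters only at the very end to identify all residual variances with a single scalar $\sigma^2$, so that $\varepsilon$ has the required isotropic covariance.
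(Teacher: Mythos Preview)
Your proof is correct and follows essentially the same approach as the paper: both directions rely on identifying the conditional variances $\Sigma_{k,k|p(k)}$ with the error variances in the recursive (Cholesky-type) factorization of $\Sigma$ along a topological order of a complete DAG. Your (ii)$\Rightarrow$(i) is slightly more explicit than the paper's---you construct $B$ directly via the regression coefficients and carry out the inductive independence argument for the residuals, whereas the paper invokes the fact that any Gaussian factorizes according to a complete DAG to obtain $B$ and a diagonal $\Omega$ and then reads off equality of the diagonal entries---but this is a difference in level of detail rather than in substance.
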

    \begin{proof}
        ``$\implies$'' Let $\Sigma \in \mathcal{M}$. Then there exists a DAG $G$ such that $N(0,\Sigma) \in \mathcal{N}(G)$. Considering Remark \ref{remark:ordering}, we can assume without loss of generality that $G$ is complete. Straightforward calculations with the multivariate normal density then imply that the variances of each node conditioned on its parents are equal to $\sigma^2$; or see a combinatorial argument as in \cite[Equation (7.4)]{drton:2018}. Thus, for all $k=1, \dots , d$,
        \begin{equation*}
            \sigma^2=\text{Var}(X_k|X_{p(k)})=\Sigma_{k,k}-\Sigma_{k,p(k)}(\Sigma_{p(k),p(k)})^{-1}\Sigma_{p(k),k}.
        \end{equation*}
        
        ``$\impliedby$'' Since $G$ is complete, the multivariate normal distribution $N(0,\Sigma)$ factorizes according to the graph $G$. This implies the existence of a matrix $B \in \R^G$ and a positive definite diagonal matrix $\Omega \in \R^{d \times d}$ with $\Sigma = (I_d-B)^{-1}\Omega(I_d-B)^{-T}$. Equation \eqref{eq:thm1} implies that $\Omega_{k,k}=\Sigma_{k,k|p(k)}$ is equal for all $k=1, \dots , d$ and the claim follows.
    \end{proof}

    \begin{remark}\label{remark:unique}
        Note the identifiability result under homoscedasticity across all interacting variables, e.g., Theorem 1 in \cite{ChenEV}. Each $\Sigma \in \mathcal{M}$ corresponds to a unique $(B,\sigma^2)$ and, thus, a unique minimal DAG $G(B)$. As explained in Remark \ref{remark:ordering}, the union in the definition of $\mathcal{M}$ \eqref{definition:model} is not necessarily disjoint and analogously the (complete) graph satisfying Condition 2 in Proposition \ref{theorem:thm1} is not necessarily unique. Nevertheless, since $B$ is unique, all graphs satisfying Condition 2 in Proposition \ref{theorem:thm1} share the same edge set with non-zero edge coefficients and correspond to all valid causal orderings of the minimal DAG. Thus, every 'additional' parent of node $i$ in a complete DAG $G$ from Condition 2 compared to the unique minimal DAG $G(B)$ is conditionally independent of node $i$ given $p_{G(B)}(i)$. This follows by d-separation since every (undirected) path between node $i$ and an additional parent either traverses the parents $p_{G(B)}(i)$ in the minimal DAG or contains a collider $c$ with $i <_{G(B)} c$ and thus this collider has no descendent in $p_{G(B)}(i)$.
    \end{remark}
 
    Our testing problem is the following classical statistical problem. We are concerned with a parametric model $\mathcal{M}$, parameterized by the covariance matrix, which is an intricate union of subsets of the cone of positive definite matrices given by polynomial constraints. Within that statistical model, we need to test for all attainable values of the total causal effect $\psi$. In the following Proposition, we show that the parameter of interest is a function of the covariance matrix. Thus, the considered statistical testing problem can be fully parameterized in terms of the covariance matrix.
	
    \begin{proposition}\label{theorem:effect}
    Let $\Sigma \in \mathcal{M}$. Then the total causal effect $\mathcal{C}(i \rightarrow j)$ is given by
	\begin{equation*} 
	    \mathcal{C}(i \rightarrow j)=\Sigma_{j,i|p(i)}(\Sigma_{i,i|p(i)})^{-1},
	\end{equation*}
    with parent set $p(i)$ based on any corresponding DAG $G$ from Proposition \ref{theorem:thm1}.
    \end{proposition}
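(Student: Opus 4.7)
The plan is to leverage the identifiability result recorded in Remark~\ref{remark:unique}: given $\Sigma \in \mathcal{M}$, there is a unique pair $(B,\sigma^2)$ with minimal DAG $G(B)$, and by Remark~\ref{remark:paths} the target admits the closed form $\mathcal{C}(i\to j) = [(I_d-B)^{-1}]_{j,i}$. The task therefore reduces to showing that, for any complete DAG $G$ satisfying Proposition~\ref{theorem:thm1} with parent set $p(i)=p_G(i)$, the ratio $\Sigma_{j,i|p(i)}/\Sigma_{i,i|p(i)}$ equals this path sum. The key structural fact I would exploit is that the ordering of any such $G$ is a valid causal ordering of $G(B)$, so $p_G(i)$ consists entirely of non-descendants of $i$ in $G(B)$, and hence $\varepsilon_i$ is independent of $X_{p_G(i)}$.

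First, I would handle the denominator. By the conditional independence recorded in Remark~\ref{remark:unique}, the ``additional'' parents $A := p_G(i)\setminus p_{G(B)}(i)$ are independent of $X_i$ given $X_{p_{G(B)}(i)}$. The best linear predictor of $X_i$ given $X_{p_G(i)}$ therefore collapses to $\sum_{l\in p_{G(B)}(i)} \beta_{i,l} X_l$, with residual $\varepsilon_i$. Consequently $\Sigma_{i,i|p_G(i)} = \mathrm{Var}(\varepsilon_i) = \sigma^2$, which is in fact directly the content of Proposition~\ref{theorem:thm1}(ii).

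Next, I would compute the numerator as $\Sigma_{j,i|p_G(i)} = \mathrm{Cov}(X_j - \E[X_j\mid X_{p_G(i)}],\varepsilon_i)$. Since $\E[X_j\mid X_{p_G(i)}]$ is a linear function of $X_{p_G(i)}$ and $\varepsilon_i$ is independent of $X_{p_G(i)}$, the cross term vanishes. Expanding $X_j = \sum_k [(I_d-B)^{-1}]_{j,k}\,\varepsilon_k$ and using the independence and common variance of the errors then gives $\mathrm{Cov}(X_j,\varepsilon_i) = [(I_d-B)^{-1}]_{j,i}\,\sigma^2$. Dividing by $\Sigma_{i,i|p_G(i)} = \sigma^2$ yields the claimed formula.

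The main obstacle I anticipate is verifying that the expression is genuinely invariant under the choice of complete DAG $G$---that padding $p_{G(B)}(i)$ with the extra nodes in $A$ alters neither the conditional variance nor the conditional covariance. This is precisely where Remark~\ref{remark:unique} does the work, together with the observation that every $l\in p_G(i)$ is a non-descendant of $i$ in $G(B)$ so that $\varepsilon_i\perp X_{p_G(i)}$; once these two facts are established, the remainder is a routine multivariate normal calculation.
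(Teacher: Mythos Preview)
Your argument is correct and complete, but it proceeds along a different route than the paper. The paper's proof splits on the relative position of $i$ and $j$ in the complete DAG $G$: when $j<_G i$ it observes that $j\in p(i)$ forces $\Sigma_{j,i|p(i)}=0=\mathcal{C}(i\to j)$; when $i<_G j$ it invokes the back-door criterion to identify $\mathcal{C}(i\to j)$ with the coefficient of $X_i$ in the regression of $X_j$ on $(X_i,X_{p(i)})$, which is the stated ratio of conditional (co)variances. Invariance under the choice of $G$ is then argued via Remark~\ref{remark:unique} as you do. By contrast, you bypass the back-door machinery entirely and compute both $\Sigma_{i,i|p_G(i)}$ and $\Sigma_{j,i|p_G(i)}$ directly from the error expansion $X=(I_d-B)^{-1}\varepsilon$, identifying the residual of $X_i$ given $X_{p_G(i)}$ with $\varepsilon_i$ and then reading off $\mathrm{Cov}(X_j,\varepsilon_i)=[(I_d-B)^{-1}]_{j,i}\,\sigma^2$. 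Your approach has the pleasant feature of treating both orderings uniformly (the case $j<_G i$ falls out automatically since $[(I_d-B)^{-1}]_{j,i}=0$) and of being entirely self-contained within the LSEM parametrisation; the paper's approach is shorter for readers already comfortable with adjustment formulas and makes the connection to standard causal-inference tools explicit. One small point worth stating explicitly in your write-up is the containment $p_{G(B)}(i)\subseteq p_G(i)$, which you use when asserting that the structural part $\sum_{l\in p_{G(B)}(i)}\beta_{i,l}X_l$ is measurable with respect to $X_{p_G(i)}$; this follows because the ordering of $G$ is a valid causal ordering of $G(B)$ and $G$ is complete.
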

    \begin{proof}
        The case $j <_G i$ is trivially true, since in this case, the total causal effect $\mathcal{C}(i \rightarrow j)$ is zero and, considering $G$ is complete, the parent set $p(i)$ contains the node $j$, which yields $\Sigma_{j,i|p(i)}=0$. 

        Assume $i <_G j$. Since the set $p(i)$ satisfies the back-door criterion with respect to $i$ and $j$, we have 
        \begin{equation*}
            \mathcal{C}(i\rightarrow j):=\frac{\text{d}}{\text{d} x_i} \E[X_j|\text{ do}(X_i=x_i)]=\frac{\text{d}}{\text{d} x_i} \E\big[\E[X_j|X_i=x_i,X_{p(i)}]\big].
        \end{equation*}
        It follows that the total causal effect $\mathcal{C}(i\rightarrow j)$ is the regression coefficient of $X_i$ when regressing $X_j$ on $(X_i, X_{p(i)})$, which is given by $\Sigma_{j,i|p(i)}(\Sigma_{i,i|p(i)})^{-1}$.
        
        We emphasize that this expression does not depend on the specific choice of the DAG $G$. All graphs satisfying Condition 2 in Proposition \ref{theorem:thm1} share the same egde set with non-zero edge weights and thus the parent set $p(i)$ always satisfies the back-door criterion in the corresponding unique minimal DAG, see Remark \ref{remark:unique}.
    \end{proof}
    Using Proposition \ref{theorem:effect}, we can define the hypothesis space $\mathcal{M}_\psi \subset \mathcal{M}$, given by all covariance matrices $\Sigma \in \mathcal{M}$ that allow for a total causal effect of size $\psi$ via
    
    \begin{align*}
	    \mathcal{M}_\psi:=\Big\{\Sigma \in \text{PD}(d): \exists \text{ complete DAG } G \text{ and } \sigma^2 >0 \text{ with } 
	    \psi \sigma^2&=\Sigma_{j,i|p(i)} \\ \text{ and }
	    \sigma^2&=\Sigma_{k,k|p(k)} \quad  \forall\  k=1, \dots , d
	       \Big\}.
    \end{align*}
    Put together we have to solve the statistical testing problem 
    \begin{align}\label{eq:testproblem}
	    H_0^{(\psi)} :
	    \Sigma \in \mathcal{M}_\psi \quad \text{against} \quad H_1 : \Sigma \in \mathcal{M} \backslash \mathcal{M}_\psi
    \end{align}
    for all attainable $\psi \in \R$. Then we invert the tests to construct valid confidence intervals for the total causal effect. Similar to the union $\mathcal{M}$, we can write the hypothesis space as a union of single hypotheses over all (complete) DAGs, that is $H^{(\psi)}_0 := \bigcup_{G \in \mathcal{G}(d)} H^{(\psi)}_0(G)$ with single hypotheses $H^{(\psi)}_0(G): \Sigma \in \mathcal{M}_{\psi}(G)$, where
    \begin{equation}\label{eq:hypothesis}
	     \mathcal{M}_{\psi}(G) := \Big\{ \Sigma \in \text{PD}(d) : \exists \sigma^2>0 \text{ with } 
	    \psi \sigma^2=\Sigma_{j,i|p(i)} \text{ and }
	    \sigma^2=\Sigma_{k,k|p(k)} \quad  \forall\  k=1, \dots , d
	    \Big\}.
	\end{equation}

    \begin{remark}\label{remark:zeroeffect}
        It is clear that for $\psi \neq 0$ we only need to consider the union over all graphs with $i <_G j$. In contrast, for a zero effect we have to consider all graphs, since the effects of multiple directed paths might 'cancel' each other. Further, for graphs with $j<_G i$ the constraint given by the size zero effect is not informative, since $j \in p(i)$ and consequently $\Sigma_{j,i|p(i)}=0$. Therefore, the corresponding hypothesis space has higher dimension. 
    \end{remark}

    \begin{example}
         Consider the following LSEM and the corresponding (minimal) DAG:
    
    \begin{minipage}[b]{0.45\linewidth}
    \begin{align*}
        X_1&=-0.5 X_3 + \varepsilon_1 \\
        X_2&=0.25 X_1 + 0.5 X_4 + 0.25 X_5 + \varepsilon_2\\
        X_3&=\varepsilon_3 \\
        X_4&=-0.5 X_1 + \varepsilon_4 \\
        X_5&=0.5 X_3 + \varepsilon_5 \\
    \end{align*}
    \end{minipage}
    \begin{minipage}[b]{0.5\linewidth}
        \tikzset{
          every node/.style={circle, inner sep=1mm, minimum size=0.8cm, draw, thick, black, fill=white, text=black},
          every path/.style={thick}
        }      \scalebox{.8}{
        \begin{tikzpicture}[->,>=triangle 45,shorten >=1pt,
          auto,thick, main
          node/.style={circle,fill=gray!20,draw,font=\sffamily\bfseries}]

            \node[main node] (1) at (2,0)     {$1$};
            \node[main node] (2) at (5,-4)    {$2$};
            \node[main node] (3) at (0,-2)     {$3$};
            \node[main node] (4) at (5,0)     {$4$};
            \node[main node] (5) at (2,-4)    {$5$};

            \path[color=red,every node/.style={font=\sffamily\small}] 
            (1) edge[line width=1mm] node {$-0.5$} (4) (4) edge[line width=1mm] node {$0.5$} (2);
            \path[color=red,every node/.style={font=\sffamily\small}]
            (1) edge[line width=1mm] node {$0.25$} (2);
            \path[color=black,every node/.style={font=\sffamily\small}]
            (3) edge[line width=1mm] node {$0.5$} (5) (5) edge[line width=1mm] node {$0.25$} (2);
            \path[color=black,every node/.style={font=\sffamily\small}] 
            (3) edge[line width=1mm] node {$-0.5$} (1);
        \end{tikzpicture} }
    \end{minipage}
    There exist two directed paths from node $1$ to node $2$ in the corresponding (minimal) DAG and thus, $1 <_G 2$ in all valid causal orderings. However, the effect of the two path 'cancel' each other, such that $\mathcal{C}(1\rightarrow 2)=0.25-0.5^2=0$.
    \end{example}

\section{Testing for Total Causal Effects}\label{section:testing}
    For constructing confidence intervals for the total causal effect, we invert tests of the hypothesis \eqref{eq:testproblem}. In this section, we present two concrete tests based on likelihood ratio theory: (i) a classical constrained likelihood ratio test \citep{Constrained}, and (ii) an approach based on the recently proposed split likelihood ratio test \citep{Universal}. 
    
    \subsection{Constrained Likelihood Ratio Test}

    The first option we consider is to form a classical likelihood ratio statistic for the testing problem \eqref{eq:testproblem}.  Writing $\ell_n$ for the Gaussian log-likelihood function based on the sample $X^{(1)},\dots,X^{(n)}$, the likelihood ratio statistic for \eqref{eq:testproblem} is
        \begin{equation}
        \label{eq:actual_lrstat}
        \check{\lambda}^{(\psi)}_n:=2\Big(\sup_{\Sigma \in \mathcal{M}} \ell_n(\Sigma)- \sup_{\Sigma \in  \mathcal{M}_{\psi}} \ell_n(\Sigma)\Big).
    \end{equation} 
    In regular settings the statistic may be compared to a chi-square quantile for an asymptotic test.  Here, however, asymptotic distribution theory is difficult because both the null hypothesis and the alternative are intricate unions and do not satisfy the usual regularity assumptions.  In the simplified bivariate setting, \citet{UAI21:Strieder} avoid the difficulties posed by the existence of singularities by testing modified problems in two different approaches. In the first approach, the hypothesis and alternative are relaxed to test an ordering of (conditional) variances.  However, it is not feasible to optimize this modified testing problem in problems beyond the bivariate case. Further, the mixture weights of the arising limit distribution, which is a mixture of chi-square distributions, are difficult to determine in higher dimensions. The second approach relaxes the alternative to an unconstrained Gaussian model.  In this modification, the statistic \eqref{eq:actual_lrstat} is changed to a larger statistic for which the optimization of $\mathcal{M}$ is replaced by an optimization over the positive definite cone $\text{PD}(d)$. This approach, however, may cause problems in the test inversion approach, as the union of tested null hypotheses no longer coincides with the alternative.  In particular, small confidence regions may arise due to model misspecification.


    In the present paper we will derive our confidence region for the total causal effect from tests that reject for too large values of the original statistic $\check{\lambda}^{(\psi)}_n$ from \eqref{eq:actual_lrstat}.  However, we will consider relaxing the alternative $\mathcal{M}$ and employ the theory of intersection union tests to obtain a simple yet effective upper bound on the distribution of $\check{\lambda}^{(\psi)}_n$.  

    To obtain this upper bound, first note that the hypothesis we need to test is an intricate union of single hypotheses given by $ \bigcup_{G \in \mathcal{G}(d)} \mathcal{M}_{\psi}(G)$. While we will later show that every single hypothesis defines a smooth submanifold, their union is not necessarily a smooth submanifold again. Nevertheless, the original test statistic \eqref{eq:actual_lrstat} can then be written as $\check{\lambda}^{(\psi)}_n=\min_{ G \in \mathcal{G}(d)}\check{\lambda}^{(\psi)}_n(G)$, with 
    
    \begin{equation}
        \label{eq:actual_lrstatG}
        \check{\lambda}^{(\psi)}_n(G):=2\Big(\sup_{\Sigma \in \mathcal{M}} \ell_n(\Sigma)- \sup_{\Sigma \in  \mathcal{M}_{\psi}(G)} \ell_n(\Sigma)\Big).
    \end{equation} 

    Thus, the original test statistic can be upper bounded by every single test statistic $\check{\lambda}^{(\psi)}_n(G)$ and a valid level $\alpha$ test of the union of single hypotheses can be constructed by testing every single hypothesis. We then reject the union null hypothesis if we reject all single hypotheses.  A formal proof of this classical theory of intersection union tests can be found, e.g., in \cite[Theorem 8.3.23]{Casella}. 
    
    In the second step, we relax the alternative to an unconstrained Gaussian model to obtain the following larger test statistic
    \begin{equation*}
        \lambda^{(\psi)}_n(G):=2\Big(\sup_{\Sigma \in \text{PD}(d)} \ell_n(\Sigma)- \sup_{\Sigma \in  \mathcal{M}_{\psi}(G)} \ell_n(\Sigma)\Big) 
    \end{equation*}
    
    Relaxing the alternative to an optimization over the entire cone of positive definite matrices PD$(d)$ potentially increases the achieved likelihood and consequently upper bounds each single test statistic $\check{\lambda}^{(\psi)}_n(G)$.
    
    Combining both approximations yields an upper bound of the original test statistic $\check{\lambda}^{(\psi)}_n$ and thus, a simple way to construct asymptotically conservative hypothesis tests. In the following, we state our main result by turning them into conservative confidence intervals for the total causal effect $\mathcal{C}(i \rightarrow j)$ via test inversion. 
	\begin{theorem}\label{theorem:lrtinterval}
	    Let $\alpha \in (0,1)$. Then an asymptotic $(1-\alpha)$-confidence set for the causal effect $\mathcal{C}(i\rightarrow j)$ is given by
        \begin{align*}
            C = \{\psi \in \R  :  \check{\lambda}_n^{(\psi)}(i <_G j)\leq \chi_{d,1-\alpha}^2 \} \cup \{0 : \check{\lambda}_n^{(0)}(j <_G i)\leq \chi_{d-1,1-\alpha}^2\},
        \end{align*}
        where $\check{\lambda}_n^{(\psi)}(i <_G j):=\min_{ G \in \mathcal{G}(d)\,:\, i <_G j} \check{\lambda}_n^{(\psi)}(G)$ and $\check{\lambda}_n^{(0)}(j <_G i):=\min_{ G \in \mathcal{G}(d)\,:\, j <_G i}\check{\lambda}_n^{(0)}(G)$.
	\end{theorem}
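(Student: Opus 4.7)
The proof reduces, via the standard test inversion duality (Theorem~9.2.2 of \cite{Casella}), to exhibiting for every $\psi \in \R$ an asymptotically conservative level-$\alpha$ test of $H_0^{(\psi)}\colon \Sigma \in \mathcal{M}_\psi$ whose acceptance region corresponds to inclusion in $C$. My plan is to upper bound, for a suitably chosen complete DAG, the rejection probability by the tail of a chi-square distribution via a chain of two elementary inequalities and then invoke Wilks's theorem on a single smooth model.

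Fix the true covariance $\Sigma_0 \in \mathcal{M}$ and let $\psi_0 = \mathcal{C}(i \rightarrow j)$. By Proposition~\ref{theorem:thm1} there exists a complete DAG $G_0$ with $\Sigma_0 \in \mathcal{M}(G_0)$, and by Proposition~\ref{theorem:effect} the value $\psi_0$ is recovered from $\Sigma_0$ using the parents in this same $G_0$, so $\Sigma_0 \in \mathcal{M}_{\psi_0}(G_0)$. I would then use the chain
\begin{equation*}
    \check{\lambda}_n^{(\psi_0)}(\cdot) \;\leq\; \check{\lambda}_n^{(\psi_0)}(G_0) \;\leq\; \lambda_n^{(\psi_0)}(G_0),
\end{equation*}
where the first inequality reflects that $\check{\lambda}_n^{(\psi_0)}(\cdot)$ is a minimum over DAGs of a prescribed ordering that contains $G_0$, and the second follows from $\mathcal{M} \subset \mathrm{PD}(d)$. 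Consequently the probability that $\psi_0$ is rejected is bounded above by $P(\lambda_n^{(\psi_0)}(G_0) > \chi^2_{k, 1-\alpha})$ for the appropriate degrees of freedom $k$.

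To identify $k$ I would compute the codimension of $\mathcal{M}_{\psi}(G)$ inside $\mathrm{PD}(d)$ by parameter counting. A complete DAG $G$ with equal variances has $\binom{d}{2} + 1$ free parameters while $\dim \mathrm{PD}(d) = \binom{d+1}{2}$, so $\mathcal{M}(G)$ has codimension $d-1$. The additional constraint $\psi\sigma^2 = \Sigma_{j,i|p(i)}$ is informative when $i <_G j$, raising the codimension to $d$; it is automatically satisfied (because $j \in p(i)$ implies $\Sigma_{j,i|p(i)} = 0$) when $j <_G i$ and $\psi = 0$, leaving the codimension at $d-1$, consistent with Remark~\ref{remark:zeroeffect}. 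Standard Wilks theory then gives $\lambda_n^{(\psi_0)}(G_0)$ converging in distribution to $\chi^2_k$ under $\Sigma_0$, with $k$ matching the quantiles in $C$.

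Combining these pieces yields the two cases of the theorem. When $\psi_0 \neq 0$, Remark~\ref{remark:paths} forces a directed path from $i$ to $j$, so $i <_{G_0} j$ for every valid complete $G_0$ and $\psi_0$ is covered through the first component of $C$ with codimension $d$. When $\psi_0 = 0$, $G_0$ may have either ordering of $i,j$: if $i <_{G_0} j$ the first component covers $0$ with codimension $d$, and if $j <_{G_0} i$ the second component covers $0$ with codimension $d-1$, so the union defining $C$ exactly accommodates both possibilities. The principal obstacle is verifying that each $\mathcal{M}_{\psi}(G)$ really is a smooth submanifold of $\mathrm{PD}(d)$ of the stated codimension, so that Wilks's theorem genuinely applies at $\Sigma_0$; I would isolate this into a separate lemma showing that the Jacobian of the defining polynomial constraints has full row rank on $\mathcal{M}_{\psi}(G)$, a point the surrounding text itself flags for later verification.
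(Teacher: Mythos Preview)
Your proposal is correct and follows essentially the same route as the paper: the chain $\check{\lambda}_n^{(\psi_0)} \leq \check{\lambda}_n^{(\psi_0)}(G_0) \leq \lambda_n^{(\psi_0)}(G_0)$ combined with a Wilks-type limit for the single smooth model $\mathcal{M}_{\psi_0}(G_0)\subset\mathrm{PD}(d)$ is exactly what the paper does, with the chi-square limits and the full-rank Jacobian check isolated into Lemmas~\ref{theorem:limit1} and~\ref{theorem:limit2}. Your case split on whether $i<_{G_0}j$ or $j<_{G_0}i$ and the resulting degrees of freedom $d$ versus $d-1$ matches the paper's argument precisely.
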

	
	\begin{remark}
	   Note that we only need to consider complete DAGs, which drastically reduces the number of single hypothesis tests that need to be evaluated and thus the computational burden. Further, we need to respect graphs with $j <_G i$ only for zero-sized effects. Therefore, our proposed confidence regions may contain a non-zero interval and an isolated zero. In contrast to the bivariate case, however, zero effects can also correspond to orderings with $i <_G j$ due to cancellation, see also Remark \ref{remark:zeroeffect}.
	\end{remark}

    \begin{proof}
    Let $\psi \in \R$ and $\Sigma \in \mathcal{M}_{\psi}$. Since $\mathcal{M}_{\psi}=\bigcup_{G \in \mathcal{G}(d)} \mathcal{M}_{\psi}(G)$ there exists a complete graph $G$, such that  $\Sigma \in \mathcal{M}_{\psi}(G)$. If $i <_G j$, it follows from the introduced upper bound and Lemma \ref{theorem:limit1}
    \begin{equation*}
        P_{\Sigma}\Big( \check{\lambda}^{(\psi)}_n > \chi_{d,1-\alpha}^2 \Big) \leq   P_{\Sigma}\Big( \check{\lambda}^{(\psi)}_n(G) > \chi_{d,1-\alpha}^2 \Big)   \leq P_{\Sigma}\Big( \lambda^{(\psi)}_n (G) > \chi_{d,1-\alpha}^2 \Big) \rightarrow \alpha.
    \end{equation*} 
    We obtain an analogous result for $j <_G i$ with Lemma \ref{theorem:limit2}. Employing the test inversion approach yields the claim.
    \end{proof}


    In the following, we derive the asymptotic distribution of the upper bound $\lambda^{(\psi)}_n(G)$ of our test statistic under every single hypothesis $\mathcal{M}_{\psi}(G)$. As previously indicated, the cases $i <_G j$ versus $j <_G i$ define different dimensional hypothesis spaces, thus we consider both instances separately. 
   
    First, we consider the case $i <_G j $ and assume without loss of generality that $(1,2, \dots, d)$ is the causal ordering. In this case, the single hypothesis can be written as $\mathcal{M}_{\psi}(G) = \{ \Sigma \in \text{PD}(d) : f_\psi(\Sigma|G)=0\}$, with
    \begin{equation*}
	    f_\psi(\Sigma|G):=\begin{pmatrix}
	    \Sigma_{j,i|p(i)}-\psi \Sigma_{1,1} \\ 
	    \Sigma_{2,2|p(2)}-\Sigma_{1,1} \\
	    \vdots \\
	    \Sigma_{d,d|p(d)}-\Sigma_{1,1}
	    \end{pmatrix}.
    \end{equation*}
    The relaxed model space of all positive definite matrices PD$(d)$ can be identified with an open subspace in $\R^{\frac{1}{2}(d^2+d)}$. The hypothesis space $ \mathcal{M}_{\psi}(G)$ then defines a $\frac{1}{2}(d^2-d)$-dimensional submanifold of $\R^{\frac{1}{2}(d^2+d)}$, since the Jacobian of $f_\psi$ has full rank $d$. To see that, note its (nonzero) triangular structure for derivatives for $(\Sigma_{k,k})_{k=2,\dots,d}$ in all rows except the first row. The first row, however, has a nonzero derivative for $\Sigma_{j,i}$, and zero derivatives for $(\Sigma_{k,k})_{k=j,\dots,d}$. 
	
    Using the fact that this single hypothesis defines a smooth submanifold, we determine the limit distribution of the upper bound $\lambda^{(\psi)}_n(G)$.
	
	\begin{lemma}\label{theorem:limit1}
	    Let $G \in \mathcal{G}(d)$ be a DAG with $i <_G j$ and $\psi \in \R$. Under the hypothesis $H_0^{(\psi)}(G)$ the likelihood ratio test statistic $\lambda^{\psi}_n(G)$ satisfies 
	    \begin{equation*}
	        \lambda^{(\psi)}_n(G) \overset{\mathcal{D}}{\rightarrow} \chi^2_d.
	    \end{equation*}
	\end{lemma}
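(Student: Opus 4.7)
The plan is to invoke the classical Wilks theorem for the likelihood ratio of a smooth submanifold hypothesis sitting inside a regular ambient model, reading off the degrees of freedom from the codimension.

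First I would fix the setup. The alternative is the unrestricted Gaussian model parameterized by $\Sigma \in \text{PD}(d)$, which is a regular parametric family of dimension $\tfrac12(d^2+d)$; the unconstrained MLE is the sample covariance $S_n$, and the Fisher information is positive definite throughout $\text{PD}(d)$. The null set $\mathcal{M}_\psi(G) = f_\psi^{-1}(0)$ has already been identified just before the lemma as a smooth embedded submanifold of codimension $d$: the constraint map $f_\psi(\cdot\mid G)$ is polynomial in the entries of $\Sigma$ (hence $C^\infty$), and its $d\times \tfrac12(d^2+d)$ Jacobian has full rank $d$ at every point of $\text{PD}(d)$ via the triangular pattern in the rows indexed by $(\Sigma_{k,k})_{k=2,\dots,d}$ together with the nonzero $\Sigma_{j,i}$-derivative in the first row. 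In particular, the true parameter is a regular (non-singular) point of the null manifold.

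With these ingredients in hand, I would invoke the classical Wilks result (e.g., Theorem 16.7 of Van der Vaart's \emph{Asymptotic Statistics}, or equivalently a local reparametrization that realizes the submanifold as the zero set of $d$ of the coordinates). Since the ambient model is regular, the null is a smooth codimension-$d$ submanifold, and the true parameter lies at a regular interior point, twice the log-likelihood difference converges in distribution under $H_0^{(\psi)}(G)$ to $\chi^2_d$. The dimension $d$ is exactly the codimension $\tfrac12(d^2+d) - \tfrac12(d^2-d) = d$.

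The main obstacle would normally be degeneracy of the true parameter on the constraint surface (i.e., a rank drop of the Jacobian of $f_\psi$ at $\Sigma_0$), which would invalidate the standard Wilks chi-square limit and typically produce a mixture of chi-squares instead. Here, however, this is precisely what is ruled out by the rank computation just above the lemma: the Jacobian has full rank $d$ at \emph{every} $\Sigma \in \text{PD}(d)$, and so uniformly on the null submanifold. Thus the standard regularity conditions — smoothness of the log-density in $\Sigma$, positive definiteness of the information matrix, and constant-rank behavior of the constraint near $\Sigma_0$ — are all immediate, and the conventional proof of Wilks' theorem applies verbatim to give the claimed $\chi^2_d$ limit.
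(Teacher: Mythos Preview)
Your proof is correct and follows essentially the same approach as the paper: both arguments use the fact, established just before the lemma, that $\mathcal{M}_\psi(G)$ is a smooth submanifold of codimension $d$ in $\text{PD}(d)$ (via the full-rank Jacobian of $f_\psi$), and then invoke classical likelihood ratio theory to obtain the $\chi^2_d$ limit. The paper phrases this through tangent cones and cites Drton (2009), while you invoke Wilks' theorem directly via Van der Vaart, but these are the same argument.
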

	\begin{proof}
	   Since the hypothesis space $ \mathcal{M}_{\psi}(G)$ defines a $\frac{1}{2}(d^2-d)$-dimensional submanifold of $\R^{\frac{1}{2}(d^2+d)}$, the tangent cone at every $\Sigma \in  \mathcal{M}_{\psi}(G)$ is a $\frac{1}{2}(d^2-d)$-dimensional linear subspace of $\R^{\frac{1}{2}(d^2+d)}$ and the claim follows. For details on this classical result, we refer to \cite{drton:2009}.
	\end{proof}

        In the case $j <_G i$ the causal effect from $i$ to $j$ can only be zero. Therefore, we solely need to test the single hypothesis $H^{(0)}_0(G)$. We assume again without loss of generality that $(1, \dots , d)$ is the underlying causal order. The polynomial constraint in the hypothesis \eqref{eq:hypothesis} corresponding to the fixed size effect is not informative, as the node $j$ is included in $p(i)$ and thus $\Sigma_{j,i|p(i)} = 0$. Therefore, the hypothesis $H^{(0)}_0(G)$ is solely defined by $d-1$ polynomial constraints corresponding to the assumption of equal error variances, that is, in this case we have $ \mathcal{M}_{0}(G)= \{ \Sigma \in \text{PD}(d) : f_0(\Sigma|G)=0\}$, where \begin{equation*}
	    f_0(\Sigma|G):=\begin{pmatrix}
	       \Sigma_{2,2|p(2)}-\Sigma_{1,1} \\
	    \vdots \\
	    \Sigma_{d,d|p(d)}-\Sigma_{1,1}
	    \end{pmatrix}.
	\end{equation*}
	
    Thus, similarly to the previous case, we obtain a chi-square distribution as the limit distribution, however with fewer degrees of freedom.
       \begin{lemma}\label{theorem:limit2}
	    Let $G \in \mathcal{G}(d)$ be a DAG with $j <_G i$. Under the hypothesis  $H_0^{(0)}(G)$ the likelihood ratio test statistic $\lambda^{(0)}_n(G)$ satisfies 
	    \begin{equation*}
	        \lambda^{(0)}_n(G) \overset{\mathcal{D}}{\rightarrow} \chi^2_{d-1}.
	    \end{equation*}
	\end{lemma}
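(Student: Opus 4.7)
The plan is to mirror the argument used for Lemma \ref{theorem:limit1}, reducing the statement to a standard Wilks-type result by showing that $\mathcal{M}_{0}(G)$ is a smooth submanifold of $\R^{\frac{1}{2}(d^2+d)}$ of codimension $d-1$. Once this is established, the relaxed alternative $\mathrm{PD}(d)$ is an open subset, the tangent cone at every point of $\mathcal{M}_{0}(G)$ is a linear subspace of codimension $d-1$, and the classical limit theory for likelihood ratio statistics (see \cite{drton:2009}) delivers the $\chi^2_{d-1}$ limit.

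The concrete step is to verify that the Jacobian of $f_0(\Sigma \mid G)$ has full rank $d-1$ at every $\Sigma \in \mathcal{M}_{0}(G)$. Assuming without loss of generality the causal ordering $(1,\dots,d)$, the completeness of $G$ implies $p(k) \subseteq \{1,\dots,k-1\}$, so the conditional variance $\Sigma_{k,k|p(k)}$ is a function only of entries $\Sigma_{\ell,m}$ with $\ell,m \leq k$. Consequently, $\partial \Sigma_{k,k|p(k)} / \partial \Sigma_{m,m} = 0$ whenever $m > k$, while the derivative with respect to $\Sigma_{k,k}$ itself is $1$. Restricting the Jacobian to the $d-1$ columns indexed by $\Sigma_{2,2},\dots,\Sigma_{d,d}$ therefore yields a lower triangular matrix with ones on the diagonal (the $-\Sigma_{1,1}$ term contributes nothing to these derivatives for $k \geq 2$), hence of rank $d-1$.

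This gives full-rank Jacobian, so by the constant rank / implicit function theorem $\mathcal{M}_{0}(G)$ is a $\bigl(\tfrac{1}{2}(d^2+d) - (d-1)\bigr)$-dimensional smooth submanifold of $\mathrm{PD}(d)$, and its tangent cone at each point coincides with the $(d-1)$-codimensional tangent space. Invoking the smooth-manifold Wilks theorem for the Gaussian log-likelihood on the open alternative $\mathrm{PD}(d)$ then yields $\lambda_n^{(0)}(G) \overset{\mathcal{D}}{\to} \chi^2_{d-1}$.

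The main obstacle is the rank verification, but it is essentially identical to the one already carried out in the $i <_G j$ case: the only substantive difference is that the single row coming from the causal-effect constraint is absent, which simply lowers the codimension by one. No new probabilistic or geometric input is needed beyond what was used in Lemma \ref{theorem:limit1}.
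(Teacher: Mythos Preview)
Your proposal is correct and follows exactly the paper's approach: the paper's proof simply says ``Analogously to Lemma \ref{theorem:limit1}. Note that the Jacobian has full rank $d-1$ due to its triangular structure,'' and you have spelled out precisely that triangular-structure argument and the ensuing Wilks-type conclusion via \cite{drton:2009}.
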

	\begin{proof}
	    Analogously to Lemma \ref{theorem:limit1}. Note that the Jacobian has full rank $d-1$ due to its triangular structure.
	\end{proof}
	
	With Theorem \ref{theorem:lrtinterval} we defined an asymptotically valid confidence interval for the total causal effect that is able to capture both types of uncertainty, the uncertainty about the causal structure and the uncertainty about the numerical size of the effect. Besides the theoretical guarantees for the asymptotic coverage of the proposed confidence set, we demonstrate in the simulation section, that even in finite samples, the confidence set achieves the desired coverage probability. We emphasize again that our new approach uses the original test statistic \eqref{eq:actual_lrstat} with an upper bound on its distribution. Thus, we avoid the issues that may arise under model misspecification in the (bivariate) method by \citet{UAI21:Strieder} due to contrasting a hypothesis restricted to equal error variances with a general alternative.
 
    In the following we propose an alternative framework for constructing hypothesis tests for \eqref{eq:testproblem} based on the theory of universal inference \citep{Universal}. While this framework leads to more conservative intervals, it comes with a finite sample guarantee.

    \subsection{Split Likelihood Ratio Test}
	
	Likelihood ratio tests provide a powerful tool for constructing hypothesis tests that are calibrated via asymptotic distribution theory.  However, in the present context the needed distribution-theoretic insights are difficult to obtain and we adopted stochastic upper bounds.  An interesting alternative is to conduct the needed tests in the framework of universal inference that was introduced by \citet{Universal}. Their method employs a modification of the likelihood ratio statistic, called split likelihood ratio, based on a data splitting approach. Due to the independence of the split data, one can apply a universal critical value, instead of relying on asymptotic distribution theory. Type-I error control is then guaranteed by an application of Markov's inequality. In the following, we employ their methodology for constructing conservative confidence regions for total causal effects with finite sample guarantee. 
	
	To construct a split likelihood ratio test for the test problem \eqref{eq:testproblem}, we proceed as follows. First, we split the data into two subsets $D_0= \{X^{(1)}, \dots , X^{(k)} \}$ and $D_1 = \{X^{(k+1)}, \dots , X^{(n)} \}$ and define the log-likelihood functions
	\[
	 \ell_j(\Sigma):=\sum_{X^{(i)}\in D_j} \log p(X^{(i)}|\Sigma), \quad j=0,1,
	\]
    based on $D_0$ and $D_1$, respectively. Let $\tilde\Sigma_1:= \text{argmax}_{\Sigma \in \mathcal{M}} \ell_1(\Sigma)$ be the maximum likelihood estimator of $\Sigma$ restricted to LSEMs with equal error variances based on $D_1$.
	Then the split likelihood ratio test statistic is defined as
	\[
	    \Tilde{\lambda}^{(\psi)}_n:=2\Big( \ell_0(\tilde\Sigma_1)- \sup_{\Sigma \in  \mathcal{M}_{\psi}} \ell_0(\Sigma)\Big).
	\]
    The insight of \citet{Universal} is that for any fixed $\Sigma^* \in $ PD$(d)$ it holds that
    \[
    \E_{\Sigma}\bigg[\prod_{X^{(i)}\in D_0} \tfrac{p(X^{(i)}|\Sigma^*)}{p(X^{(i)}|\Sigma)}\bigg] \leq \int \prod_{X^{(i)}\in D_0}p(X^{(i)}|\Sigma^*) =1
    \]
    and, thus, Markov's inequality implies that for any $\alpha\in(0,1)$, under $H^{(\psi)}_0: \Sigma \in \mathcal{M}_{\psi}$, 
    \begin{align}\label{eq:slrtvalid}
        P_{\Sigma}\bigg(\Tilde{\lambda}^{(\psi)}_n>-2\log \alpha\bigg) &\leq \alpha \E_{\Sigma}\bigg[\prod_{X^{(i)}\in D_0} \tfrac{p(X^{(i)}|\tilde\Sigma_1)}{\sup_{\Sigma \in  \mathcal{M}_{\psi}} p(X^{(i)}|\Sigma)}\bigg] \\
        &\leq \alpha \E_{\Sigma}\Bigg[\E_{\Sigma}\bigg[\prod_{X^{(i)}\in D_0} \tfrac{p(X^{(i)}|\tilde\Sigma_1)}{\sup_{\Sigma \in  \mathcal{M}_{\psi}} p(X^{(i)}|\Sigma)}\Big|D_1\bigg]\Bigg] \leq \alpha. \nonumber
    \end{align}
    Thus, the decision rule
    \[ 
        \text{ reject }  H^{(\psi)}_0 \quad \text { if } \quad \Tilde{\lambda}^{(\psi)}_n>-2\log \alpha
    \]
   constitutes a valid level $\alpha$ test. This test holds level in finite samples and without any regularity conditions. Thus, we can define the following confidence region for the total causal effect based on the split likelihood ratio test.
    
    \begin{theorem}\label{theorem:slrtinterval}
	    Let $\alpha \in (0,1)$. Then a $(1-\alpha)$-confidence set for the causal effect $\mathcal{C}(i\rightarrow j)$ is given by
        \begin{align*}
            C = \{\psi \in \R  :  \Tilde{\lambda}_n^{(\psi)}(i <_G j)\leq -2\log(\alpha) \} \cup \{0 : \Tilde{\lambda}_n^{(0)}(j <_G i)\leq -2\log(\alpha)\},
        \end{align*}
      where $\Tilde{\lambda}_n^{(\psi)}(i <_G j):=\min_{ G \in \mathcal{G}(d)\,:\, i <_G j}\Tilde{\lambda}_n^{(\psi)}(G)$ and $\Tilde{\lambda}_n^{(0)}(j <_G i):=\min_{ G \in \mathcal{G}(d)\,:\, j <_G i}\Tilde{\lambda}_n^{(0)}(G)$ and  
      \begin{equation*}
           \Tilde{\lambda}^{(\psi)}_n(G):=2\Big( \ell_0(\tilde\Sigma_1)- \sup_{\Sigma \in  \mathcal{M}_{\psi}(G)} \ell_0(\Sigma)\Big).
      \end{equation*}
    \end{theorem}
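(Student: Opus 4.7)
The plan is to proceed exactly as in the proof of Theorem \ref{theorem:lrtinterval}, but replace the asymptotic chi-square calibration by the universal critical value $-2\log\alpha$ supplied by the split likelihood ratio construction. The key advantage is that the finite-sample Markov bound in \eqref{eq:slrtvalid} already establishes, for any single composite null $H_0^{(\psi)}:\Sigma\in\mathcal{M}_\psi$, that the decision rule rejecting when $\Tilde{\lambda}_n^{(\psi)}>-2\log\alpha$ has size at most $\alpha$. Thus, the test inversion argument of \cite[Theorem 9.2.2]{Casella} immediately produces a valid $(1-\alpha)$-confidence set $\{\psi\in\R : \Tilde{\lambda}_n^{(\psi)}\leq -2\log\alpha\}$. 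The remaining work is purely bookkeeping to rewrite this set in terms of the per-graph statistics $\Tilde{\lambda}_n^{(\psi)}(G)$.

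Second, I would note that because $\mathcal{M}_\psi=\bigcup_{G\in\mathcal{G}(d)}\mathcal{M}_\psi(G)$, the denominator supremum splits as $\sup_{\Sigma\in\mathcal{M}_\psi}\ell_0(\Sigma)=\max_G\sup_{\Sigma\in\mathcal{M}_\psi(G)}\ell_0(\Sigma)$, while the numerator $\ell_0(\tilde\Sigma_1)$ does not depend on $G$. Therefore $\Tilde{\lambda}_n^{(\psi)}=\min_{G\in\mathcal{G}(d)}\Tilde{\lambda}_n^{(\psi)}(G)$, which is exactly the intersection-union test formulation: rejecting the union null requires rejecting each single-graph null, and the resulting test inherits the level guarantee obtained in the preceding step. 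By Remark \ref{remark:ordering}, restricting the minimum to complete DAGs suffices, since any DAG model is contained in the model of a completed supergraph.

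Third, I would handle the zero/non-zero dichotomy described in Remark \ref{remark:zeroeffect}. For $\psi\neq 0$, a graph with $j<_G i$ forces $\Sigma_{j,i|p(i)}=0$, which is incompatible with $\psi\sigma^2=\Sigma_{j,i|p(i)}$ and $\sigma^2>0$; hence $\mathcal{M}_\psi(G)=\emptyset$ for such $G$ and the minimum may be taken only over graphs with $i<_G j$, yielding the first term in $C$. For $\psi=0$, both orderings remain admissible (possibly through cancellation along directed paths in the $i<_G j$ case, or trivially in the $j<_G i$ case), so zero is included in $C$ iff the split LRT accepts under at least one of the two collections of orderings; this explains the isolated point $\{0\}$ contributed by the second term. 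Combining these two pieces gives exactly the set $C$ in the statement.

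The main obstacle is conceptually minor compared with Theorem \ref{theorem:lrtinterval}: here there is no delicate manifold dimension argument or tangent cone analysis to carry out, because the universal critical value $-2\log\alpha$ is distribution-free and does not change with the dimension of the single-graph hypothesis space. The only point requiring a bit of care is verifying that the measurability and independence structure used in the iterated conditioning in \eqref{eq:slrtvalid} survive the union-over-graphs step; since $\tilde\Sigma_1$ is measurable with respect to $D_1$ and the per-graph suprema in $\ell_0$ are measurable functions of $D_0$, this reduces to the standard argument already spelled out there, applied after taking the $\min_G$.
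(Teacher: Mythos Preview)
Your proposal is correct and follows essentially the same approach as the paper: the paper's proof simply states that the result follows directly from the test inversion approach and the Markov bound \eqref{eq:slrtvalid}, noting that for non-zero effects only graphs with $i<_G j$ need to be considered. You have spelled out the intersection-union decomposition and the zero/non-zero case distinction more explicitly, but the underlying argument is identical.
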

    
    \begin{remark}
        Similar to the asymptotic confidence set given by Theorem \ref{theorem:lrtinterval}, we need to consider all graphs when testing for zero effects. Thus, the confidence region may consist of a non-zero interval and a disconnected zero that reflects the remaining uncertainty about the existence of an effect.
    \end{remark}

    \begin{proof}
        This result follows directly from the test inversion approach and \eqref{eq:slrtvalid}. Note that for non-zero effects we only need to consider graphs with $i <_G j$.
    \end{proof}

    \section{Computational Details and Numerical Experiments}\label{section:computation}
        
    In this section we present computational details as well as shortcuts in our proposed algorithm that significantly improve the computation time.  We also present a simulation study in which we analyze and compare the coverage probabilities and performance of the different proposed testing procedures.

    \subsection{Algorithm and Computational Shortcuts}

    We focus on the algorithm for calculating the likelihood ratio based confidence regions introduced in Theorem \ref{theorem:lrtinterval}, denoted with \texttt{LRT}. By adjusting the critical values in the algorithm and incorporating the data splitting procedure, the split likelihood ratio based confidence regions can be computed analogously.
    As the number of possible underlying causal structures, that is, the number of DAGs, grows superexponentially with the number of nodes, we need to carefully use combinatorial shortcuts in order to be able to compute our proposed confidence intervals for the total causal effect. 
    
    Our algorithm, presented as Algorithm \ref{alg:overview}, consists of two subroutines. The first routine \texttt{possible.order} (Algorithm \ref{alg:order}) immediately rejects all implausible causal orderings and, thus, reduces the set of possible causal orderings in which we subsequently test for all attainable effects with the second routine \texttt{testeffect} (Algorithm \ref{alg:testeffect}).  Carefully reducing the space of plausible causal orderings drastically decreases the computation time and allows us to compute our proposed confidence intervals, which take uncertainty over all possible DAGs into account, in a reasonable time for a moderate (but far from trivial) number of involved variables. Precise computation times can be found in Subsection \ref{section:simulation}, which presents the results of a simulation study.
    
    \begin{algorithm}[t]
    \renewcommand\figurename{Algorithm}
    \captionof{figure}{returns \texttt{LRT} confidence region  for $\mathcal{C}(i\rightarrow j)$}
    \label{alg:overview}
    \begin{algorithmic}
    \State \textbf{Input:} data, level $\alpha$, stepsize $s$
    \State (orderings, likelihood, effect) $\gets$ \texttt{possible.order}(data, $\alpha$) \Comment{all plausible orderings}
    \State L$1 \gets$ likelihood[1] \Comment{maximum likelihood alternative}
    \State startvalues $\gets $unique(effect)
    \While{startvalues is not empty}    \Comment{start with effects from unrestricted MLEs}
        \State (leftbound, rightbound) $\gets \min$(startvalues)
        \State (left, right) $\gets$ TRUE
        \While{left}
            \State leftbound $\gets$ leftbound$ - s$
            \State left $\gets $ \texttt{test.effect}(data, leftbound, orderings, L1, $\alpha$) \Comment{test for size leftbound effect}
        \EndWhile
        \While{right}
            \State rightbound $\gets$ rightbound$ + s$
            \State right $\gets $ \texttt{test.effect}(data, rightbound, orderings, L1, $\alpha$)
        \EndWhile
        \State \textbf{append} (leftbound$+\tfrac{s}{2}$,rightbound$-\tfrac{s}{2}$) to intervals \Comment{possibly multiple intervals}
        \State \textbf{remove} all values smaller than rightbound from startvalues
    \EndWhile
    \State zero $\gets$ \texttt{test.effect}(data, 0, orderings, L1, $\alpha$) \Comment{test for size zero effect}
    \State \textbf{return} (zero, intervals)
    \end{algorithmic}
    \end{algorithm}

    In order to reduce the space of plausible graphical structures as much as possible, we note again that we do not need to consider all possible DAGs but merely complete ones, represented by all possible causal orderings. This immensely decreases the worst-case number of single hypotheses we need to perform to reject an effect of size $\psi$ to $d$ factorial. Our subroutine \texttt{possible.order} (Algorithm \ref{alg:order}) further reduces the number of plausible causal orderings as follows.
    
    First, we quickly approximate the maximum likelihood under the alternative with the maximum likelihood $\widehat{\mathrm{L}1}$ obtained under the ordering obtained from the method of \cite{ChenEV} that sorts conditional variances. Subsequently, we calculate the maximum likelihood of all possible orderings without restrictions on the causal effect and immediately reject implausible orderings by employing the (largest) critical value $\chi_{d,1-\alpha}^2$. Note that we can search through the space of orderings recursively starting from the first node and immediately reject (partial) orderings if the corresponding (partial) maximum likelihood already exceeds the threshold. Further, for the subsequent testing procedure, the specific ordering of the parent set $p(i)$ is not relevant. Thus, for all orderings that only differ in the specific ordering of $p(i)$, we subsequently just test with the parent order that achieves the maximum likelihood. Additionally, we only need to test for a zero effect in the maximum likelihood ordering out of all orderings with $j<_\mathcal{G}i$. Finally, we include an attainable effect in our confidence set if it can not be rejected for at least one of the possible causal orderings using the subroutine \texttt{testeffect}, Algorithm \ref{alg:testeffect}. Therefore, we do not need to continue testing for a specific value of the effect if we find an ordering such that we cannot reject the value. Thus, first sorting the causal orderings by their maximum likelihood further improves the computation time.

    \begin{algorithm}[t]
    \renewcommand\figurename{Algorithm}
    \captionof{figure}{\texttt{possible.order}() returns plausible causal orderings}
    \label{alg:order}
    \begin{algorithmic}
    \State \textbf{Input:} data, level $\alpha$
    \State crit.val $\gets $ qchisq($1-\alpha,d)$  \Comment{$1-\alpha$ quantile of $\chi^2_d$-distr.}
    \State $\hat{\Sigma} \gets $ cov(data)
    \For{$1:d$}
    \State \textbf{append} which.min($\hat{\Sigma}_{\cdot\cdot|\mathrm{var.order}}$) to var.order \Comment{ordering of cond. variances}
    \EndFor
    \State $\widehat{\mathrm{L}1} \gets$ \texttt{get.likelihood}(var.order) \Comment{(unrestricted) maximum likelihood}
    \For{$1:d$}
    \For{order in orderings}
    \For{$v$ in setdiff($1:d$, order)}
    \State part.order $\gets$ c(order,$v$)
    \State likelihood $\gets$ \texttt{get.likelihood}(part.order) \Comment{(partial) maximum likelihood} 
    \If{2($\widehat{\mathrm{L}1}-$likelihood) $\leq$ crit.val}
    \State \textbf{append} part.order to orderings
    \If{$v$ equals $i$}
    \State effect $\gets$ $\hat{\Sigma}_{i,j|\mathrm{order}}/\hat{\Sigma}_{i,i|\mathrm{order}}$ \Comment{causal effect of maximum likelihood}
    \EndIf
    \EndIf
    \EndFor
    \State \textbf{remove} order from orderings
    \EndFor \Comment{order only relevant after $i$}
    \State \textbf{choose} order with max likelihood out of all orders in orderings that do not contain $i$ \State and are defined on the same set of nodes 
    \EndFor \Comment{keep only max likelihood for zero effect}
    \State \textbf{choose} order with max likelihood out of all orders in orderings with $j<_\mathcal{G}i$ 
    \State \textbf{sort} orderings for decreasing likelihood 
    \State \textbf{return} (orderings, likelihood, effect)
    \end{algorithmic}
    \end{algorithm}
    
    We start testing within the plausible causal orderings with the minimal causal effect that corresponds to the maximum likelihood estimates. Then, we step-wise decrease the value until we reject in all orderings. The last value we did not reject is a lower bound for our confidence region. We repeat the procedure with increasing steps until we reject in all orderings and the last value we did not reject is an upper bound for this part of our confidence region. If there remain causal effects from the maximum likelihood estimates that exceed this upper bound, we repeat this procedure. Thus, using this strategy, we might obtain a confidence region that consists of multiple intervals. Finally, we test whether we have to include zero in our confidence interval, that is, whether there remains uncertainty about the existence of an effect.

    \begin{algorithm}[t]
    \renewcommand\figurename{Algorithm}
    \captionof{figure}{\texttt{test.effect}() tests causal effect $\mathcal{C}(i\rightarrow j)=\psi$}
    \label{alg:testeffect}
    \begin{algorithmic}
    \State \textbf{Input:} data, causal effect $\psi$, orderings, likelihood alternative L1, level $\alpha$
    \For{order in orderings}
        \If{$j<_\mathcal{G}i$}
            \State crit.val $\gets $ qchisq($1-\alpha,d-1)$
            \If{$\psi$ equals zero}
                \State likelihood $\gets$ \texttt{get.likelihood}(order) \Comment{(unrestricted) maximum likelihood} 
            \Else{}
                \State likelihood $\gets -\infty$ \Comment{reject} 
            \EndIf
        \Else{}
        \State crit.val $\gets $ qchisq($1-\alpha,d)$
        \State likelihood $\gets$ \texttt{get.likelihood}(order,$\psi$) \Comment{(restricted) maximum likelihood} 
        \EndIf
        \If{2(L1$-$likelihood)$\leq$ crit.val}
            \State \textbf{return} TRUE
        \EndIf{}
    \EndFor
    \State \textbf{return} FALSE
    \end{algorithmic}
    \end{algorithm} 
   
    \subsection{Maximum Likelihood Estimation}
	
    Implementing the proposed algorithm involves the routine \texttt{get.likelihood}, which maximizes the centered Gaussian (log-)likelihood function $\ell_n(\Sigma)$ for a given causal ordering.  The function is given by
	\[
	    \tfrac{2}{n}\ell_n(\Sigma)=-\log \det (2\pi \Sigma) - \mathrm{tr}(\Sigma^{-1}\Hat{\Sigma}),
	\]
	where $\hat \Sigma = \frac{1}{n} \sum_{l=1}^n X^{(l)} (X^{(l)})^T$ is the empirical covariance. Due to the underlying assumption of homoscedasticity across all interacting variables, the set of possible distributions can be uniquely identified via \eqref{models:G}. Therefore, equivalently we maximize
	\begin{equation*}
	    \tfrac{2}{n}\ell_n(B,\sigma^2)=- d \log (2\pi \sigma^{2}) - \tfrac{1}{\sigma^2}\text{tr}((I_d-B)^{T}(I_d-B)\hat{\Sigma}),
	\end{equation*}
    using the edge matrix $B \in \R^{G}$ and the  common variance $\sigma^2>0$. The zero entries of the matrix $B$ are implied by the given causal ordering, and further, due to the acyclic structure, $B$ is permutation similar to a lower triangular matrix. Straightforward calculations yield that the maximum of $\ell_n(B,\sigma^2)$ over the equal variance $\sigma^2 >0$ is given by
    \begin{equation}\label{eq:maxprob}
	    \sup_{\sigma^2>0} \ell_n(B,\sigma^2)=- \tfrac{nd}{2} \log ( \tfrac{2\pi}{d}\text{tr}((I_d-B)^{T}(I_d-B)\hat{\Sigma})) - \tfrac{nd}{2}.
    \end{equation}
    Maximizing equation \eqref{eq:maxprob} over $B \in \R^G$ is then equivalent to minimizing
    \begin{equation}\label{eq:fulloptim}
        \mathrm{tr}((I_d-B)^{T}(I_d-B)\hat{\Sigma}))= \frac{1}{n}\sum_{k=1}^d \sum_{l=1}^n \big(X_k^{(l)}-\sum_{p \in p(k)}\beta_{k,p}X_p^{(l)}\big)^2.
    \end{equation}
    Thus, in the unrestricted case without constraints for a fixed size causal effect, the problem is equivalent to solving $d$ separate linear least squares problems with minimizer 
    \begin{equation}\label{eq:regression}
	    \hat{\beta}_{k,p(k)}^T=(\hat{\Sigma}_{p(k),p(k)})^{-1}\hat{\Sigma}_{p(k),k} \quad \mathrm{for} \quad k = 1, \dots , d,
    \end{equation}
    and corresponding minimum
    \[
        \frac{1}{n}\sum_{k=1}^d \sum_{l=1}^n \big(X_k^{(l)}-\sum_{p \in p(k)}\hat{\beta}_{k,p}X_p^{(l)}\big)^2 = \sum_{k=1}^d \hat{\Sigma}_{k,k|p(k)}.
    \]
    Further, in this case, we can calculate the maximum likelihood of any partial ordering that starts from the source node, since it collapses into subproblems. Thus, we can reject all orderings that start with a given partial ordering if the corresponding partial maximum likelihood already exceeds the threshold.
    
    In addition, we need to calculate the maximum likelihood estimates under each single hypothesis $H_0^{(\psi)}(G)$. For $j <_G i$ we only test the hypothesis $H_0^{(0)}(G)$. However, in that case, the constraint of a size zero causal effect $\mathcal{C}(i \rightarrow j)=0$ is not restrictive and, thus, the minimizer is similarly  given by \eqref{eq:regression}. Therefore, we assume in the following $i <_G j$. Analogously to the previous calculations, maximizing the Gaussian likelihood for a given causal ordering and a fixed total causal effect can be solved by minimizing the linear least squares problem \eqref{eq:fulloptim}. However, the constraint of a fixed total causal effect $\mathcal{C}(i \rightarrow j)=\psi$ links a subset of the regression parameters and we can not solve all $d$ linear least squares problems separately. 
    
    Recalling the path properties of the total causal effect, see Remark \ref{remark:paths}, the assumption $\mathcal{C}(i \rightarrow j)=(I_d-B)^{-1}_{j,i}=\psi$ restricts all $\beta_{k,p(k)}$ for $k \in p(j)\cup \{j\}\setminus p(i)\cup \{i\}$. This follows by observing that
    \[
    (I_d-B)^{-1}_{j,i}=\sum_{k=1}^{d-1}(B^k)_{j,i}
    \]
    and further inductively 
    \[(B^k)_{j,i}=\sum_{p \in p(j)\setminus  p(i) } \beta_{j,p} (B^{k-1})_{p,i}.\]
    For all other $k \notin p(j)\cup \{j\}\setminus p(i)\cup \{i\}$, the corresponding part of the optimization problem \eqref{eq:fulloptim} can be solved separately with the (unrestricted) minimizer given by \eqref{eq:regression}. However, it remains to minimize the following non-trivial joint least squares problem
    \begin{equation*}\label{eq:regproblem}
        \sum_{k \in p(j)\cup \{j\}\setminus p(i)\cup \{i\}} \sum_{l=1}^n \big(X_k^{(l)}-\sum_{p \in p(k)}\beta_{k,p}X_p^{(l)}\big)^2 \quad \text{where} \quad  \beta_{j,i}=\psi-\sum_{m=2}^{d-1}(B^m)_{j,i}.
    \end{equation*}
    Note that we can reduce the constraint to the submatrix $B_s$, which is formed by selecting all rows and columns corresponding to the set $p(j)\cup \{j\}\setminus p(i)$, since $(B^m)_{j,i}=(B_s^m)_{j,i}$. Then, the constraint yields $\beta_{j,i}$ as a function of the remaining $\beta_{k,p(k)}$ for $k \in p(j)\cup \{j\}\setminus p(i)\cup \{i\}$ and we solve the remaining least squares problem numerically with quasi-Newton methods.

	\subsection{Simulations}\label{section:simulation}
	
	In this section, we present the results of a simulation study that compares the coverage frequencies and the widths of the proposed confidence regions as well as their computing times. Our simulation experiments are designed as follows. We generate synthetic data sets based on LSEMs with Gaussian errors and equal variances associated to randomly selected directed acyclic graphs in a three-step procedure. First, we randomly select a permutation of $d$ nodes, representing the causal ordering of $(X_1, \dots, X_d)$. Second, we generate edge weights for all edges in the complete graph that corresponds to the chosen causal ordering according to a normal distribution $N(\beta,0.1)$. In that way, $\beta$ reflects the average strength of the direct effects in the graph, and thus, smaller values indicate higher uncertainty about the underlying causal structure. In the next step, we prune the complete graph, that is, we include an edge with a probability of $0.9$ for dense and $0.5$ for sparse graphs. Finally, we generate $n$ samples according to the LSEM, represented by the chosen graph, with standard normal distributed errors.  For a range of edge weights $\beta$, sample sizes $n$ and dimensions $d$, we generated multiple independent data sets and determined the confidence sets for the total causal effect of $X_1$ on $X_2$ with confidence level $\alpha=0.05$ using the different proposed methods (\texttt{LRT} given by Theorem \ref{theorem:lrtinterval} and \texttt{SLRT} given by Theorem \ref{theorem:slrtinterval}). We repeated this procedure twice, for the case of a true non-zero effect and the case of no effect. We chose the tuning parameter of the \texttt{SLRT} interval, the splitting ratio, optimal in the sense of \citet{StriederSLRT}.

    \begin{table}[t]
    \centering
    \caption{Empirical coverage of $95\%$-confidence intervals for the total causal effect of $X_1$ on $X_2$ in randomly generated $6$-dim. DAGs ($1000$ replications).}
    \resizebox{0.9\linewidth}{!}{%
    \begin{tabular}{ rr|rrr|rrr|rrr|rrr }
    \toprule
    &  & \multicolumn{ 6 }{c|}{ TRUE EFFECT }  & \multicolumn{ 6 }{|c}{ NO TRUE EFFECT } \\
     &  & \multicolumn{ 3 }{c|}{ SPARSE }  & \multicolumn{ 3 }{|c|}{ DENSE } & \multicolumn{ 3 }{|c|}{ SPARSE } & \multicolumn{ 3 }{|c}{ DENSE } \\
    method & $n$\raisebox{0.1cm}{$\setminus\beta$}  & 0.05 & 0.1 & 0.5 & 0.05  & 0.1  & 0.5 & 0.05 & 0.1 & 0.5 & 0.05 & 0.1 & 0.5 \\ 
    \midrule
    \multirow{3}{*}{LRT} & 100 & 1.00 & 1.00 & 0.99 & 0.99 & 1.00 & 1.00 & 1.00 & 1.00 & 1.00 & 1.00 & 1.00 & 1.00 \\
    & 500 & 0.99 & 0.99 & 0.99 & 0.99 & 0.99 & 1.00 & 1.00 & 1.00 & 1.00 & 1.00 & 1.00 & 1.00 \\
    & 1000 & 0.98 & 0.98 & 1.00 & 0.99 & 1.00 & 0.99 & 1.00 & 1.00 & 1.00 & 1.00 & 1.00 & 1.00 \\
    \midrule
    \multirow{3}{*}{SLRT} & 100 & 1.00 & 1.00 & 1.00 & 1.00 & 1.00 & 1.00 & 1.00 & 1.00 & 1.00 & 1.00 & 1.00 & 1.00  \\
    & 500 & 0.99 & 1.00 & 1.00 & 1.00 & 1.00 &  1.00 & 1.00 & 1.00 & 1.00 & 1.00 & 1.00 & 1.00 \\
    & 1000 & 0.99 & 0.99 & 1.00 & 0.99 & 1.00 & 1.00 & 1.00 & 1.00 & 1.00 & 1.00 & 1.00 & 1.00 \\
    \midrule
    \multirow{3}{*}{Bootstrap} & 100 & \textcolor{red}{0.70} & \textcolor{red}{0.75} & 0.97 & \textcolor{red}{0.75} & \textcolor{red}{0.80} & 0.97 & 1.00 & 1.00 & 1.00 & 1.00 & 1.00 & 1.00 \\
    & 500 & \textcolor{red}{0.72} & \textcolor{red}{0.77} & 0.97 & 
    \textcolor{red}{0.78} & \textcolor{red}{0.85} & 0.98 & 1.00 & 1.00 & 1.00 & 1.00 & 0.99 & 1.00 \\
    & 1000 & \textcolor{red}{0.76} & \textcolor{red}{0.80} & 0.95 & \textcolor{red}{0.79}& \textcolor{red}{0.88} & 0.98 & 0.99 & 0.99 & 1.00 & 1.00 & 1.00 & 1.00 \\
    \bottomrule
    \end{tabular}%
    }
    \label{tab:cover}
    \end{table}
    
    We report the observed empirical coverage probabilities for $6$-dimensional graphs in Table \ref{tab:cover}. All proposed methods achieve the desired coverage frequency of $0.95$ and seem to be conservative. This is expected for the split likelihood ratio method since it is based solely on Markov's inequality. However, it is also not surprising that the \texttt{LRT} method is conservative, considering the use of intersection union testing and the fact that we set critical values based on the relaxed alternative. For comparison, we also included classical bootstrapping results in our simulation study. The bootstrapping method employs \texttt{GDS}, an established causal discovery algorithm for LSEMs with equal variance, proposed by \citet{PetersEV}. For each resampled data set, the \texttt{GDS} method uses a greedy search algorithm to maximize the likelihood and estimate the total causal effect. The results validate the theoretical findings in \citep{drton:williams:2011, UAI21:Strieder}. Bootstrapping methods do not correctly account for uncertainty in the causal structure and do not achieve the required coverage frequency.

    \begin{figure}[t]
    \setcounter{figure}{0}
    \centering
    \includegraphics[width=0.7\linewidth]{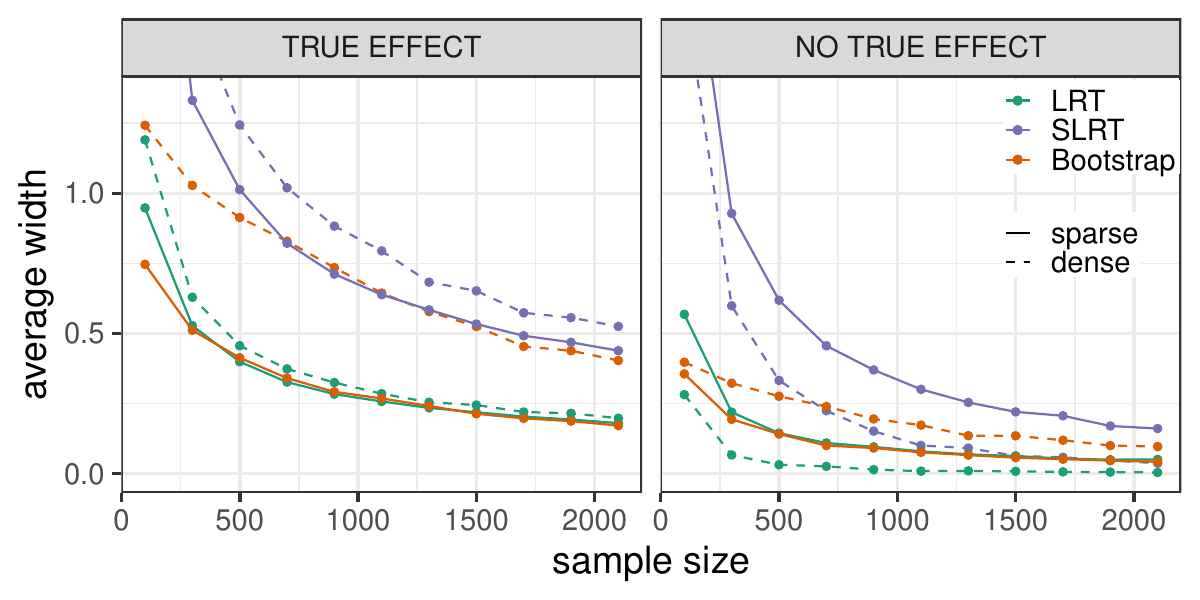}
    \caption{Mean width of $95\%$-confidence intervals for the total causal effect of $X_1$ on $X_2$ in randomly generated $6$-dim. DAGs (1000 replications).}\label{fig::width6}
    \end{figure}

    In addition to the coverage frequencies, we investigated how conclusive the confidence regions are, that is, how informative are intervals for practitioners in deciding about the existence and size of the causal effects, whilst still providing a valid uncertainty assessment.   To this we show in Figure \ref{fig::width6} the average width of the non-zero part of the confidence sets. We report the results for dimension $d=6$, where the data was generated according to random DAGs with expected edge weight $\beta=0.5$, against the sample size. In contrast to the bootstrapping method, our proposed \texttt{LRT} method constitutes a valid confidence region. Nevertheless, even in the considered setting with edge weight $\beta=0.5$, where all variants achieved the desired coverage, our method significantly outperforms bootstrapping in the dense setting, while performing similarly in the sparse setting. In contrast, the split likelihood ratio tests yield wider confidence intervals. We also note that the confidence intervals are generally wider in the dense setting, that is, the remaining uncertainty about the numerical size of the effects is higher for dense DAGs.  Our intuition here is that in dense DAGs more directed paths contribute to the total causal effect, thus, more edge weights are involved with remaining uncertainty. Figure \ref{fig::width6} shows that our proposed methods successfully help to locate the numerical size of existing causal effects while correctly quantifying remaining uncertainty.

    \begin{figure}[t]
    \centering
    \includegraphics[width=0.7\linewidth]{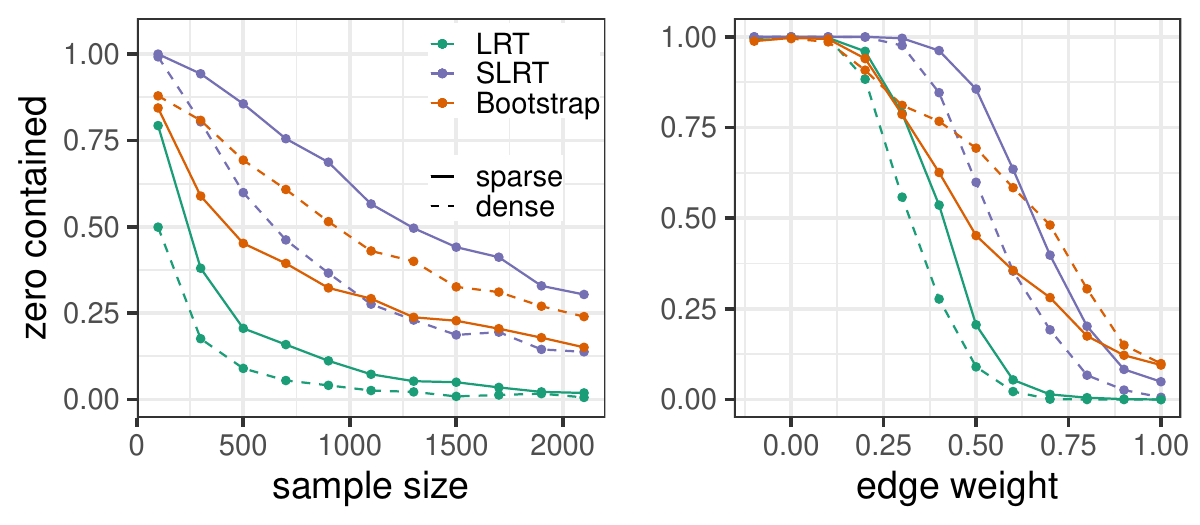}
    \caption{Percentages of times zero contained in $95\%$-confidence intervals in randomly generated $6$-dim. DAGs with non-zero effect (1000 replications). (Left) Against sample size. (Right) Against expected edge weights.} 
    \label{fig::zero}
    \end{figure}
    
    In addition to locating the numerical size of existing effects, an informative confidence interval should help in deciding whether there exists remaining uncertainty about the existence of an effect. Therefore, to investigate whether the proposed methods pick up on the causal direction of the effect between two variables, we plotted the proportions of times zero is included in the confidence sets when there is a true nonzero effect. Figure \ref{fig::zero} shows the results for dimensions $d=6$ with expected edge weight $\beta=0.5$ against the sample size and for the sample size $n=500$ against the expected edge weight. The \texttt{LRT} method eliminates the remaining uncertainty about the direction of the causal effect the quickest with increasing sample size, significantly outperforming the bootstrapping method and the split likelihood ratio tests. As intuitively expected, for higher average edge weights there is less uncertainty about the underlying causal structure and thus, the methods more often exclude the possibility of no causal effect. 
    
    Furthermore, one observes that in contrast to the bootstrapping method, the confidence regions of our proposed methods contain the zero less often in the dense setting compared to the sparse setting. This stems from the fact that there exists less uncertainty about the causal ordering for dense DAGs, and thus, the remaining uncertainty about the direction of the involved effects is lower than in the sparse setting. While there might be less uncertainty about the numerical size of the effect and thus, narrower confidence regions in the sparse setting, the uncertainty about the causal structure is higher.

    \begin{figure}[t]
    \centering
    \includegraphics[width=0.7\linewidth]{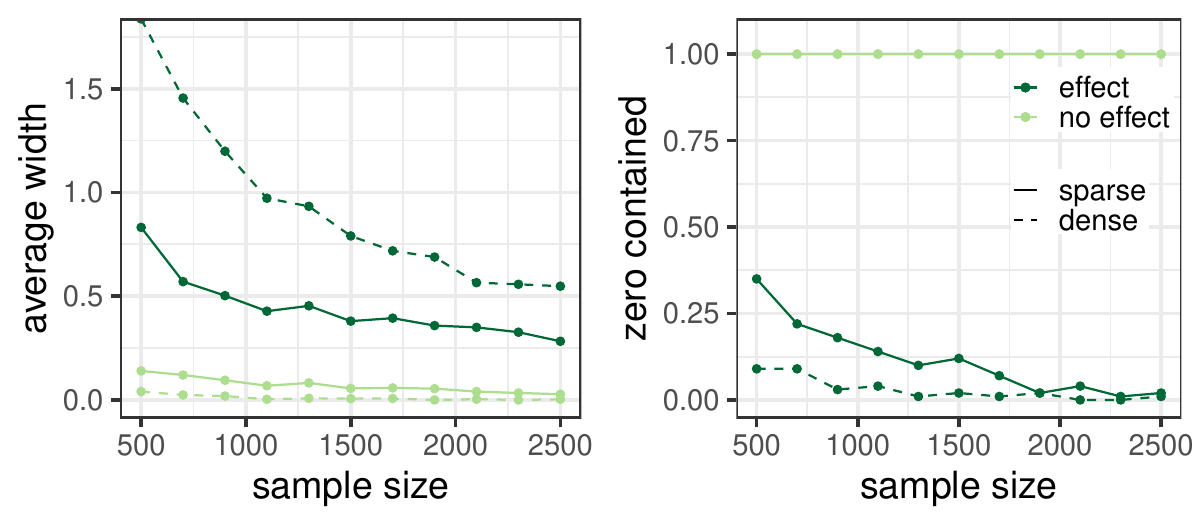}
    \caption{Mean width and percentages of times zero contained in $95\%$-confidence intervals for the total causal effect of $X_1$ on $X_2$ in randomly generated $12$-dim. DAGs (100 replications)}\label{fig::width12}
    \end{figure}

    We obtain similar results for larger causal structures with $12$ involved variables. Figure \ref{fig::width12} shows the percentages of times our proposed \texttt{LRT}-confidence regions contain zero as well as the mean width of the non-zero part. The data was generated with an expected edge weight $\beta=0.5$ and increasing sample sizes. In the case of no true effect, the confidence intervals always correctly contain zero. The average width of the non-zero part is close to zero, even for low sample sizes. If there is a non-zero total causal effect, the remaining uncertainty about the direction of the effect is already eliminated in samples of size around $2000$. We highlight again that the remaining uncertainty about the causal structure (reflected by the percentage of times zero is contained in the confidence interval) is higher in the sparse setting. In contrast, the remaining uncertainty about the numerical size of the effect, reflected by the average width of the non-zero part of the confidence intervals, is lower.
        
    In summary, the proposed framework for constructing confidence intervals for total causal effects successfully detects the direction and the numerical size of causal effects and correctly quantifies the remaining uncertainty in causal structure and effect size.
    
    \begin{figure}[t]
    \begin{subfigure}{.48\textwidth}
    \centering
    \includegraphics[width=0.95\linewidth]{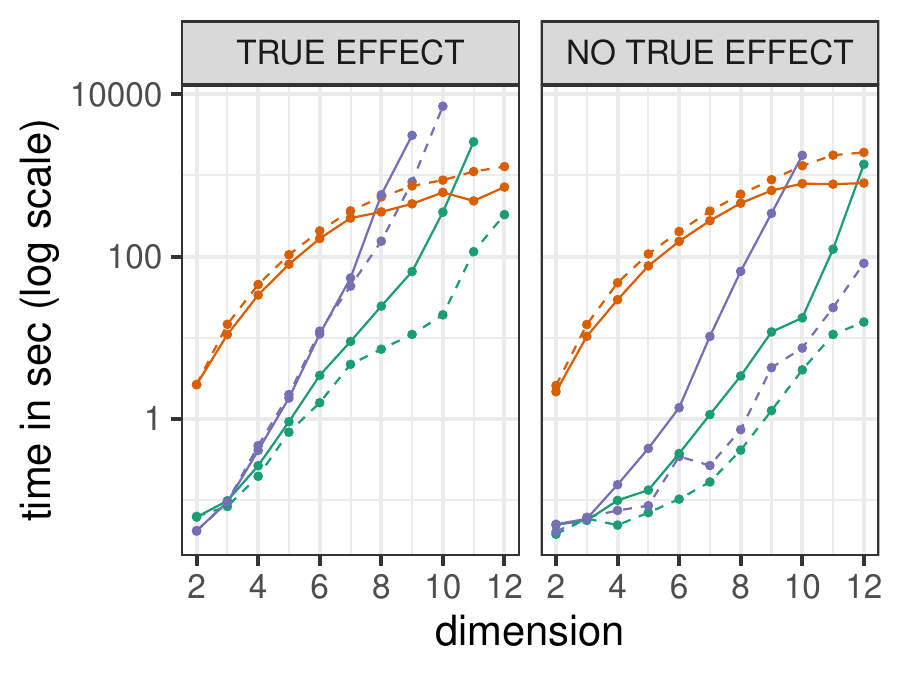}
    \end{subfigure}%
    \begin{subfigure}{.48\textwidth}
    \centering
    \includegraphics[width=0.95\linewidth]{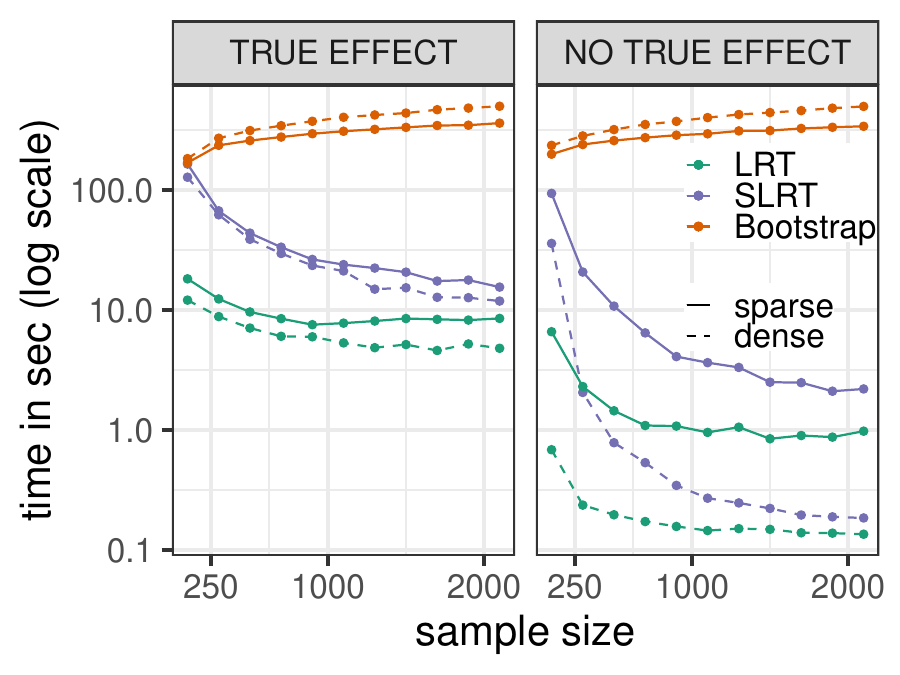}
    \end{subfigure}
    \caption{Mean computation times of $95\%$-confidence intervals for causal effect in randomly generated DAGs in seconds (100 replications). (Left) Against dimension. (Right) Against sample size.} 
    \label{fig::times}
    \end{figure}
    
    Figure \ref{fig::times} compares the average computation times over 100 randomly drawn DAGs. As the number of possible causal structures grows superexponentially with the dimension, the computation times for our proposed confidence intervals naturally increase. Since the intervals take uncertainty over all possible structures into account, the computation times similarly increase superexponentially. However, Figure \ref{fig::times} shows that it is feasible to compute this complete uncertainty over all causal structures for moderate dimensions with lower computation times than the bootstrapping method. We plotted the computation times for an expected edge weight $\beta=0.5$ and sample size $n=1000$ against the dimension (up to $d=12$). Contrasting the bootstrapping procedure based on the greedy search algorithm \texttt{GDS}, the computation times for our proposed confidence intervals are lower in the dense compared to the sparse setting. This is not unexpected, as dense DAGs generally coincide with less uncertainty in the causal structure and thus lead to fewer plausible causal orderings passed on to the testing procedure. Similarly, if there is no true effect, computation times are lower since we need to test within fewer plausible causal orderings. In general, the computation times are mainly determined by reducing the set of plausible orderings with Algorithm \ref{alg:order} and the uncertainty in the causal structure within the data.

    Further, Figure \ref{fig::times} also includes the average computation times for increasing sample sizes with expected edge weight $\beta=0.5$ and dimension $d=6$. In contrast to the bootstrapping procedure, increasing the sample size reduces the computation times for our proposed confidence intervals. More available data reduces the uncertainty about the underlying structure, which reduces the computation time.

    Our framework is based on the assumption of an underlying Gaussian LSEM with equal error variances. We emphasize that if the true data-generating mechanism follows a Gaussian LSEM with differing error variances, the causal structure and, consequently, the true causal effects are generally not identifiable. Therefore, our task of constructing a confidence interval for this causal effect is not well-defined. However, in this setting, our method targets the causal effect in an approximation of the data-generating mechanism under the equal error variances assumption. With practical applications in mind, we investigated this behavior and, thus, the sensitivity of our method towards deviations from equal error variances. We generated data according to the same procedure as described above (for $d=6$, $\beta=0.1$ and $n=500$), but with error variances sampled uniformly from $[1-0.5v,1+0.5v]$. In that way, $v$ quantifies the degree of deviation from homoscedasticity across the interacting variables, ranging from $0$ to $1.8 $. The empirical coverage probabilities in Figure \ref{fig::var} show that our framework is rather robust to small deviations from equal error variances. For small deviations, the targeted causal effect in the approximated equal error variance model is close to the true causal effect, and thus, our confidence intervals nevertheless cover the (unidentifiable) true effect.

    \begin{figure}[t]
    \centering
    \includegraphics[width=0.7\linewidth]{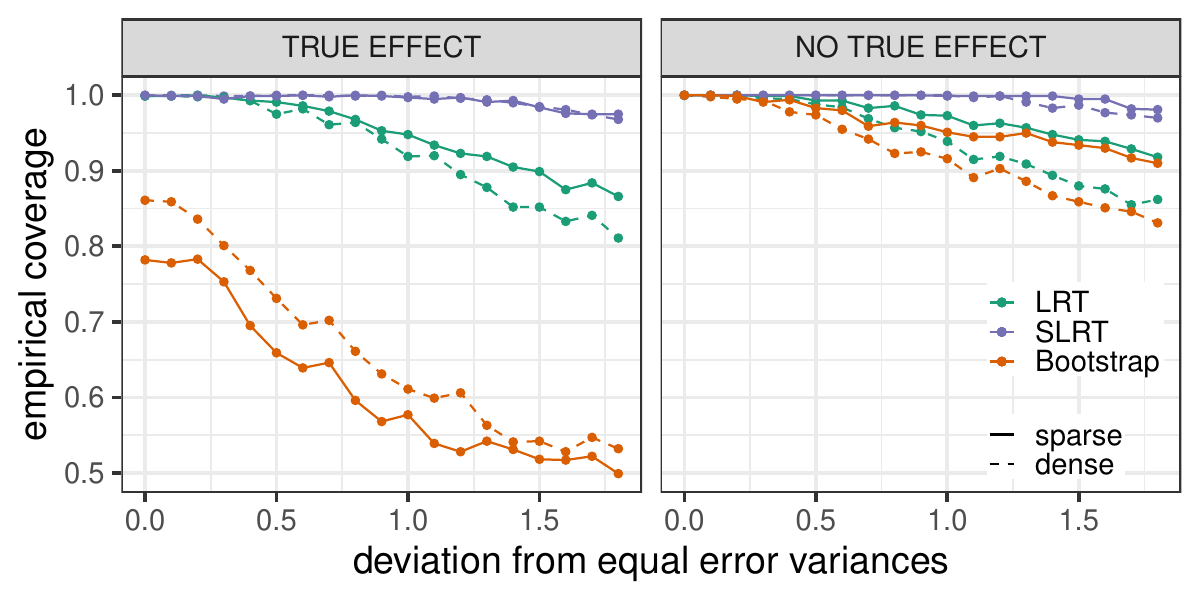}
    \caption{Empirical coverage of $95\%$-confidence intervals for the total causal effect of $X_1$ on $X_2$ in randomly generated $6$-dim. DAGs under departure from equal error variances ($1000$ replications).} 
    \label{fig::var}
    \end{figure}

    In conclusion, the likelihood ratio method performs best in our simulation studies and yields the most informative confidence intervals. Nonetheless, while the split likelihood ratio tests are more conservative, we emphasize that it is the only method with theoretical finite sample guarantees.

\subsection{Real Data Example}

    To illustrate the importance of correctly accounting for uncertainty in the causal structure, we consider the frequently studied Sachs protein data \citep{Sachs}. This collection of 14 data sets comprises expression levels for $11$ proteins and phospholipids in human T-cells obtained under different experimental conditions. In applications involving variables from similar domains and measurements obtained by similar techniques, the assumption of equal error variances offers a simple way to conduct exploratory analyses of cause-effect relations, also in light of the robustness of our method towards small deviations from equal variance that was seen in our simulations. In our data analysis, we compared our $\texttt{LRT}$ method against naive bootstrapping with the $\texttt{GDS}$ method \citep{PetersEV} as well as bootstrapping with the $\texttt{LiNGAM}$ method \citep{Lingam}. The $\texttt{LiNGAM}$ method explicitly assumes non-Gaussian errors to ensure identifiability. We considered the first of the $14$ data sets with a sample size of $853$ and focused on a subset of $10$ proteins (excluding PKA) that have measurements on a similar scale.  The protein PKA is a highly variable source node in the ground truth causal DAG reported in \cite{Sachs}.  To account for its influence, we first performed linear regressions of each protein on PKA and continued our analysis using the residuals.
    
    Figure \ref{fig::sachs} shows the calculated $95\%$-confidence intervals for $6$ exemplary causal effects. The regression coefficient of $X_i$ when regressing $X_j$ on $(X_i, X_{p(i)})$, where the conventionally accepted ground truth structure defines the parent set, is seen as the true target effect $\mathcal{C}(i \rightarrow j)$ and marked in red. In total, our proposed confidence intervals cover $84$ of the $90$ pairwise causal effects among the $10$ involved variables. In contrast, naive bootstrapping with $\texttt{GDS}$ and with $\texttt{LiNGAM}$ cover only $66$ and $72$, respectively. Our confidence intervals indicate that, even within our strict modeling assumptions, there is substantial structure uncertainty. Out of the $90$ calculated intervals, $67$ contain zero as well as non-zero effects, thus reflecting uncertainty about the direction and existence of effects. In comparison, the naive methods with $\texttt{GDS}$ and with $\texttt{LiNGAM}$ contain both directions in only $35$ and $55$ cases, respectively. Thus, while the bootstrapping methods yield narrower intervals on average, they do not correctly account for structure uncertainty and tend to miss more minor target effects.

    \begin{figure}[t]
    \begin{subfigure}{.31\textwidth}
    \centering
    \includegraphics[width=0.95\linewidth]{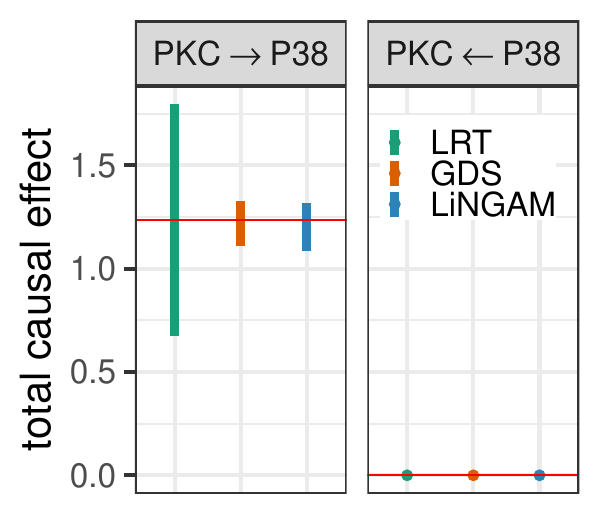}
    \end{subfigure}%
    \begin{subfigure}{.31\textwidth}
    \centering
    \includegraphics[width=0.95\linewidth]{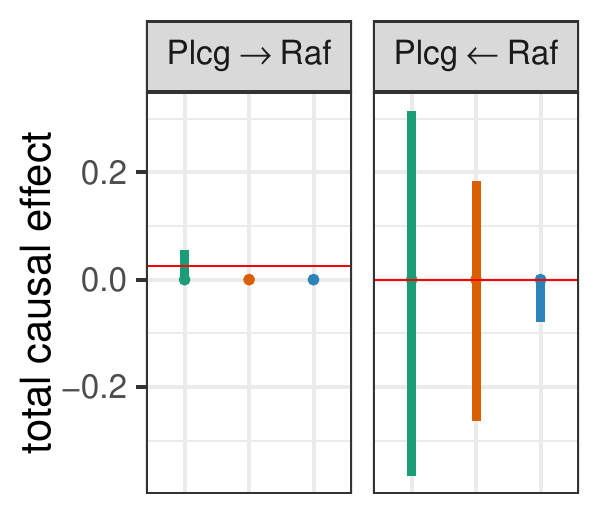}
    \end{subfigure}
    \begin{subfigure}{.31\textwidth}
    \centering
    \includegraphics[width=0.95\linewidth]{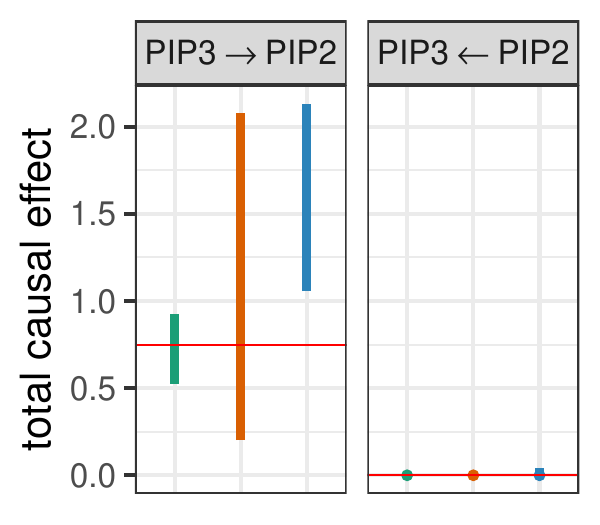}
    \end{subfigure}
    \caption{$95\%$-confidence intervals for the total causal effect between proteins from real world expression data.} 
    \label{fig::sachs}
    \end{figure}

\section{Discussion}\label{section:discussion}

In this paper, we raise the problem of rigorously quantifying uncertainty in causal inference when the causal structure is unknown. We demonstrate the intricacies of this task and propose a precise solution for constructing confidence regions for total causal effects, which capture both types of uncertainty: numerical size of effects and causal structure. By employing a test inversion approach with intersection union tests and exploiting combinatorial shortcuts, our proposed framework mathematically rigorously quantifies the remaining uncertainty in the concrete setting of LSEMs with homoscedastic Gaussian errors across all interacting variables. 

Our proposal of leveraging test inversions of joint tests for causal structure and size of effect to obtain confidence regions for causal effect estimates under structure uncertainty has a general nature and with suitable statistical tests could be extended to other modelling assumptions; see \cite{UAI21:Strieder} for results from a heuristic for the example of bivariate LSEMs with non-Gaussian errors.  Moreover, our framework could be adjusted to target different (causal) quantities by reformulating the corresponding constraint.

Assuming LSEMs with equal error variances ensures identifiability. Thus, a confidence region corresponds to a set of covariance matrices, which all uniquely entail a corresponding total causal effect. Without equal error variances (or rather identifiability in general), this unique correspondence does not hold.  While our computational framework of focusing on causal orders and complete graphs  might be adapted to consider a set of plausible covariance matrices, every covariance matrix then entails a set of possible causal effects that always contains zero. Thus, the corresponding confidence regions are not conclusive about the direction of effects. In contrast, we demonstrated that under the assumption of equal error variances our proposed confidence regions not only correctly quantify the remaining uncertainty in causal structure and numerical effect size, but also provide conclusive information which can help practitioners to decide about existence and numerical size of effects. 

The computational shortcuts we adopt allow us to quantify structure uncertainty over all possible DAGs among a moderate number of nodes. For higher dimensional systems it is not feasible to consider all possible DAGs given that there already exist more DAGs on 22 nodes ($10^{87}$) than estimated atoms in the universe. However, our method can still be used by practitioners as a valuable tool for analysing higher dimensional causal systems by partially quantifying the uncertainty in the causal structure. 
    
In the status quo scenario, practitioners would either confer with experts that provide the underlying causal structure or employ structure learning algorithms to estimate the DAG from available data. Starting from this inferred model, classical statistical methods can then be used to calculate confidence intervals for the involved causal parameters. These confidence intervals quantify uncertainty in the numerical size of the causal effect conditional on the inferred model. However, they do not incorporate uncertainty about the underlying causal structure and the (data-driven) model choice. While in higher dimensional systems it is not computationally tractable to quantify structure uncertainty by considering all DAGs, our framework can still be used as a compromise in between. Rather than solely relying on the full structure provided by experts, our method can be used to consider a set of plausible causal structures. This way practitioners can incorporate varying degrees of belief of experts in different parts of the structure by fixing parts with high confidence but still considering the remaining uncertainty over the rest. Along similar lines, instead of focusing on a single output of causal learning algorithms, practitioners can employ our method to a set of (most likely) causal structures and incorporate some degree of structure uncertainty in causal effect estimates.    

In summary, we consider our study of total causal effect in linear causal models with equal error variances as the beginning of a new line of research on confidence sets under structure uncertainty.  We anticipate fruitful generalizations in many directions including not only other model settings such as the linear non-Gaussian case, or additive noise models, but also different causal parameters other than total effects. From this perspective, while this paper develops our ideas in a very concrete setting, it sets the scene for various follow-up work in providing rigorously justified confidence regions in causal effect estimation when the causal structure itself is learned from data.

\begin{acks}[Acknowledgments]
This project has received funding from the European Research Council (ERC) under the European Union's Horizon 2020 research and innovation programme (grant agreement No.~83818), the German Federal Ministry of Education and Research and the Bavarian State Ministry for Science and the Arts. The authors of this work take full responsibility for its content.
\end{acks}

\bibliographystyle{imsart-nameyear}
\bibliography{bibliography.bib}

\end{document}